\def\bstctlcite{\@ifnextchar[{\@bstctlcite}{\@bstctlcite[@auxout]}}
\def\@bstctlcite[#1]#2{\@bsphack
  \@for\@citeb:=#2\do{%
    \edef\@citeb{\expandafter\@firstofone\@citeb}%
    \if@filesw\immediate\write\csname #1\endcsname{\string\citation{\@citeb}}\fi}%
  \@esphack}
\newcommand{\si}[1]{#1}
\newcommand{\second}{$\mathrm{s}$}
\newcommand{\nano}{$\mathrm{n}$}
\newcommand{\metre}{$\mathrm{m}$}
\newcommand{\SIrange}[3]{$#1$#3 to $#2$#3}
\newtheorem{lemma}{Lemma}
\newtheorem{proposition}{Proposition}
\newtheorem*{proposition*}{Proposition}
\theoremstyle{definition}
\newtheorem{remark}{Remark}
\newtheorem*{remark*}{Remark}
\newtheorem{assumption}{Assumption}
\tikzset{data/.style={draw,rectangle,rounded corners = 3pt,thick,fill=black!25}}
\tikzset{hp/.style={draw,rectangle}} 
\tikzset{mlink/.style={very thick,->,>=stealth}}
\tikzset{hlink/.style={<->,>=stealth,thick,dashed}}
\tikzstyle{grisEncadre}=[thick, dashed, fill=gray!20] 
\tikzstyle{grisFonce}=[fill=gray!100] 
\tikzstyle{worker}=[draw,rectangle,thick,rounded corners=3pt]
\tikzstyle{master}=[draw,ellipse,very thick,fill=black!25]
\tikzstyle{local_variable}=[draw,rectangle,thick,rounded corners=3pt,fill=black!10]
\def\firstToUp#1{\expandafter\firstToUpA#1!!} 
\def\firstToUpA#1#2!!{\MakeUppercase{#1}#2}
\pgfplotsset{compat=1.14}
\let\MYcaption\@makecaption
\let\@makecaption\MYcaption
\newcommand\nbpix{N}
\newcommand\nendm{R}
\newcommand\nband{L}
\newcommand\ntime{\Omega}
\newcommand\nworker{\Omega}
\newcommand\iworker{\omega}
\newcommand\nmworker{K}
\def\Y{\mathbf{Y}}
\def\y{\mathbf{y}}
\def\M{\mathbf{M}}
\def\m{\mathbf{m}}
\def\A{\mathbf{A}}
\def\Atk{\A_{\iworker^k}}
\def\a{\mathbf{a}}
\def\X{\mathbf{X}}
\def\x{\mathbf{A}}
\def\B{\mathbf{B}}
\def\v{\mathbf{v}}
\def\w{\mathbf{w}}
\def\z{\mathbf{M}}
\def\niter{N_{\textrm{iter}}}
\def\dtk{d_{\iworker^k}^k}
\def\xtk{\x_{\iworker^k}}
\def\xhtk{\hat{\x}_{\iworker^k}}
\def\zh{\hat{\z}}
\def\ftk{f_{\iworker^k}}
\def\gtk{g_{\iworker^k}}
\def\Ldxz{L_{\x,\z}}
\def\Ldxtz{L_{\x_\iworker,\z}}
\def\Ldxtkz{L_{\x_{\iworker_k},\z}}
\def\Lx{L_{\x}}
\def\Lxtk{L_{\xtk}}
\def\Lz{L_{\z}}
\def\1#1{\mathbf{1}_{#1}}
\def\Fnorm#1{\norm{#1}_{\text{F}}^2}%
\def\enorm#1{\norm{#1}_{2}}%
\def\argmin#1{\underset{#1}{\arg \min \,}}
\def\t#1{#1^{\text{T}}}
\DeclareMathOperator{\aSAM}{aSAM}
\DeclareMathOperator{\GMSE}{GMSE}
\DeclareMathOperator{\RE}{RE}
\DeclareMathOperator{\prox}{prox}
\DeclarePairedDelimiter{\norm}{\bigl\lVert}{\bigr\rVert}
\DeclarePairedDelimiter{\pscalar}{\bigl\langle}{\bigr\rangle}
\title{Partially Asynchronous Distributed Unmixing of Hyperspectral Images}
\author{Pierre-Antoine Thouvenin,~\IEEEmembership{Member,~IEEE}, Nicolas Dobigeon,~\IEEEmembership{Senior Member,~IEEE} and Jean-Yves~Tourneret,~\IEEEmembership{Senior Member,~IEEE}

\thanks{This work was supported in part by the Hypanema ANR Project no. ANR-12-BS03-003, by the MapInvPlnt ERA-NET MED Project no. ANR-15-NMED-0002-02, by the Thematic Trimester on Image Processing of the CIMI Labex under Grant ANR-11-LABX-0040-CIMI within the Program ANR-11-IDEX-0002-02 and by the Direction G{\'e}n{\'e}rale de l'Armement, French Ministry of Defence.}
\thanks{This work has been conducted while P.-A. T. was working with the University of Toulouse. N. D. and J.-Y. T. are with the University of Toulouse, IRIT/INP-ENSEEIHT, 31071 Toulouse, France. (e-mail: pierreantoine.thouvenin@gmail.com, \{Nicolas.Dobigeon, Jean-Yves.Tourneret\}@enseeiht.fr}}
\begin{document}
\setlength{\textfloatsep}{4pt}
\setlength{\intextsep}{4pt}
\setlength{\abovedisplayskip}{5pt}
\setlength{\belowdisplayskip}{5pt}
\maketitle
\begin{abstract} 
So far, the problem of unmixing large or multitemporal hyperspectral datasets has been specifically addressed in the remote sensing literature only by a few dedicated strategies. Among them, some attempts have been made within a distributed estimation framework, in particular relying on the alternating direction method of multipliers (ADMM). In this paper, we propose to study the interest of a partially asynchronous distributed unmixing procedure based on a recently proposed asynchronous algorithm. Under standard assumptions, the proposed algorithm inherits its convergence properties from recent contributions in non-convex optimization, while allowing the problem of interest to be efficiently addressed. Comparisons with a distributed synchronous counterpart of the proposed unmixing procedure allow its interest to be assessed on synthetic and real data. Besides, thanks to its genericity and flexibility, the procedure investigated in this work can be implemented to address various matrix factorization problems.
\end{abstract}
\begin{IEEEkeywords}
Partially asynchronous distributed estimation, hyperspectral unmixing, non-convex optimization.
\end{IEEEkeywords}

\section{Introduction} \label{sec:intro}

\IEEEPARstart{A}{cquired} in hundreds of contiguous spectral bands, hyperspectral (HS) images present a high spectral resolution, which is mitigated by a lower spatial resolution in specific applications such as airborne remote sensing. The observed spectra are thus represented as mixtures of signatures corresponding to distinct materials. Spectral unmixing then consists in estimating the reference signatures associated with each material, referred to as endmembers, and their relative fractions in each pixel of the image, referred to as abundances, according to a predefined mixture model. In practice, a linear mixing model (LMM) is traditionally adopted when the declivity of the scene and microscopic interactions between the observed materials are negligible~\cite{Bioucas2012jstars}. Per se, HS unmixing can be cast as a blind source separation problem and, under the above assumptions, can be formulated as a particular instance of matrix factorization.

For this particular application, using distributed procedures can be particularly appealing to estimate the abundances since the number of pixels composing the HS images can be orders of magnitude larger than the number of spectral bands in which the images are acquired. In this context, distributed unmixing methods previously proposed in the remote sensing literature essentially rely on synchronous algorithms~\cite{Robila2013,Sigurdsson2016,Sigurdsson2017,Tsinos2017} with limited convergence guarantees. A different approach consists in resorting to a proximal alternating linearized minimization (PALM) \cite{Bolte2013,Chouzenoux2016} to estimate the mixture parameters (see, e.g., \cite{Repetti2014,Li2016,Thouvenin2015gretsi} in this context), which leads to an easily distributable optimization problem when considering the update of the abundances, and benefits from well established convergence results.

While a synchronous distributed variant of the PALM algorithm is particularly appealing to address HS unmixing, this algorithm does not fully exploit the difference in the computing performance of the involved computing units, which is precisely the objective pursued by the numerous asynchronous optimization techniques proposed in the optimization literature (e.g., \cite{Bianchi2013,Liang2014,Chen2015,Lorenzo2015,Facchinei2015,Scutari2017,Yang2016,Pesquet2014,Combettes2016}). For distributed synchronous algorithms, a master node waits for the information brought by all the available computation nodes (referred to as \emph{workers}) before proceeding to the next iteration (e.g., updating a variable shared between the different nodes, see Fig.~\ref{fig:sync}). On the contrary, asynchronous algorithms offer more flexibility in the sense that they allow more frequent updates to be performed by the computational nodes, thus reducing their idleness time. In particular, asynchronous algorithms can lead to a significant speed up in the algorithm computation time by allowing the available computational units (i.e., cores and machines) to work in parallel, with as few synchronizations (i.e., memory locks) as possible~\cite{Peng2016,Davis2016,Cannelli2016}. For some practical problems, there is no master node, and the workers can become active at any time and independently from the other nodes~\cite{Bianchi2016,Davis2016,Cannelli2016}. For other applications, a master node first assigns different tasks to all the available workers, then aggregates information from a given node as soon as it receives its information, and launches a new task on this specific node (see Fig.~\ref{fig:async}). In this partially asynchronous setting, the workers may make use of out-of-date information to perform their local updates~\cite{Combettes2016}. Given the possible advantages brought by the asynchronicity, we propose an asynchronous unmixing procedure based on recent non-convex optimization algorithms. To this end, we consider a centralized architecture as in~\cite{Chang2016}, composed of a master node in charge of a variable shared between the different workers, and $\ntime$ workers which have access to a local variable (i.e., only accessible from a given worker) and a (possibly out-of-date) local copy of the shared variable.

Asynchronous methods adapted to the aforementioned context include many recent papers, e.g., \cite{Lian2015,Chang2016,Peng2016,Davis2016,Cannelli2016}. For HS image unmixing, Gauss-Seidel optimization schemes have proved convenient to decompose the original optimization task into simpler sub-problems, which can be solved or distributed efficiently~\cite{Wright2015}. We may mention the recently proposed partially asynchronous distributed alternating direction method of multipliers (ADMM) \cite{Chang2016}, used to solve a distributed optimization task reformulated as a consensus problem. However, HS unmixing does not allow traditional block coordinate descent (BCD) methods (such as the ADMM~\cite{Boyd2010,Wang2016}) to be efficiently applied due to the presence of sub-problems which require iterative solvers. In such cases, the PALM algorithm~\cite{Bolte2013} and its extensions~\cite{Frankel2015,Chouzenoux2016}, which are sequential algorithms, combine desirable convergence guarantees for non-convex problems with an easily distributable structure in a synchronous setting. Recently, PALM has been extended to accommodate asynchronous updates~\cite{Davis2016}, and analyzed in a stochastic and a deterministic framework. More specifically, the author in~\cite{Davis2016} considers the general case where all the variables to be estimated are shared by the different workers. However, the explicit presence of a maximum allowable delay in the update steps is problematic, since this parameter is not explicitly controlled by the algorithm. In addition, the residual terms resulting from the allowed asynchronicity have a significant impact on the step-size prescribed to ensure the convergence of the algorithm. In practice, the use of this step-size does not lead to a reduction of the computation time needed to reach convergence, as it will be illustrated in Section~\ref{sec:exp}. From this practical point of view, the algorithm proposed in~\cite{Chang2016}, where the maximum delay is explicitly controlled, appears to be more convenient. However, the use of this ADMM-based algorithm does not ensure that the constraints imposed on the shared variables are satisfied at each iteration, and the sub-problems derived in the context of HS unmixing require the use of iterative procedures.
Finally, the strategy developed in~\cite{Cannelli2016} allows more flexibility in the allowed asynchronicity, while requiring  slightly more stringent assumptions on the penalty functions when compared to~\cite{Davis2016}.

Consequently, this paper proposes to adapt the framework introduced in \cite{Cannelli2016}, which encompasses the system structure described in \cite{Chang2016}, to HS unmixing. Indeed, given the preceding remarks, the framework introduced in \cite{Cannelli2016} appears as one of the most flexible to address HS unmixing in practice. This choice is partly justified by the possible connections between the PALM algorithm and~\cite{Cannelli2016}. Indeed, the PALM algorithm enables a synchronous distributed algorithm to be easily derived for matrix factorization problems, which then offers an appropriate reference to precisely evaluate the relevance of the asynchronicity tolerated by the approach described in~\cite{Cannelli2016}. Another contribution of this paper consists in assessing the interest of asynchronicity for HS unmixing, in comparison with recently proposed synchronous distributed unmixing procedures.

The paper is organized as follows. The problem addressed in this paper is introduced in Section~\ref{sec:LMM}. The proposed unmixing procedure is detailed in Section~\ref{sec:algorithm}, along with the assumptions required from the problem structure to recover appropriate convergence guarantees. Simulation results illustrating the performance of the proposed approach on synthetic and real data are presented in Sections~\ref{sec:exp} and~\ref{sec:real_exp}. Finally, Section~\ref{sec:conclusion} concludes this work and outlines possible research perspectives.


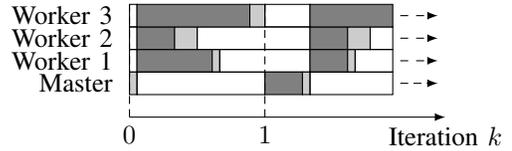
\begin{figure}[t]
\centering
\begin{tikzpicture}
\draw[black] (0,0.4) rectangle (0.1,0.7);
\draw[black,fill=gray!100](0.1,0.4) rectangle (1.1,0.7); 
\draw[black,fill=gray!40] (1.1,0.4) rectangle (1.2,0.7);
\draw[black] (1.2,0.4) rectangle (2.4,0.7);
\draw[black,fill=gray!100](2.4,0.4) rectangle (2.9,0.7);
\draw[black,fill=gray!40](2.9,0.4) rectangle (3,0.7);
\draw[black] (3,0.4) rectangle (3.5,0.7);
\draw[dashed,->,>=latex] (3.6,0.55) -- (4.1,0.55); 

\draw[black](0,0.7) rectangle (0.1,1);
\draw[black,fill=gray!100](0.1,0.7) rectangle (0.6,1); 
\draw[black,fill=gray!40] (0.6,0.7) rectangle (0.9,1);
\draw[black](0.9,0.7) rectangle (2.4,1);
\draw[black,fill=gray!100](2.4,0.7) rectangle (2.9,1);
\draw[black,fill=gray!40] (2.9,0.7) rectangle (3.2,1);
\draw[black](3.2,0.7) rectangle (3.5,1);
\draw[dashed,->,>=latex] (3.6,0.85) -- (4.1,0.85); 

\draw[black](0,1) rectangle (0.1,1.3);
\draw[black,fill=gray!100](0.1,1) rectangle (1.6,1.3); 
\draw[black,fill=gray!40] (1.6,1) rectangle (1.8,1.3);
\draw[black] (1.8,1) rectangle (2.4,1.3);
\draw[black,fill=gray!100](2.4,1) rectangle (3.5,1.3);
\draw[dashed,->,>=latex] (3.6,1.15) -- (4.1,1.15); 

\draw[black,fill=gray!40] (0,0.1) rectangle (0.1,0.4); 
\draw[black] (0.1,0.1) rectangle (1.8,0.4);
\draw[black,fill=gray!100] (1.8,0.1) rectangle (2.3,0.4);
\draw[black,fill=gray!40] (2.3,0.1) rectangle (2.4,0.4);
\draw[black](2.4,0.1) rectangle (3.5,0.4);
\draw[densely dashed,->,>=latex] (3.6,0.25) -- (4.1,0.25); 

\node[below] (k1)  at (1.8,-0.2) {$1$};
\draw[->,>=latex] (0,-0.2) -- (4.2,-0.2) node[below]{Iteration $k$} ;

\draw[densely dashed] (0,-0.2) -- (0,1.3);
\draw[densely dashed] (1.8,-0.2) -- (1.8,1.3);   

\node[below] (axis) at (0,-0.2) {$0$};
\node[left] (M)  at (-0.1,0.25) {Master};
\node[left] (W1) at (-0.1,0.55) {Worker 1};
\node[left] (W2) at (-0.1,0.85) {Worker 2};
\node[left] (W3) at (-0.1,1.15) {Worker 3};
\end{tikzpicture}
\caption{Illustration of a synchronous distributed mechanism (idle time in white, transmission delay in light gray, computation delay in gray). The master is triggered once it has received information from all the workers.}
\label{fig:sync}
\end{figure}

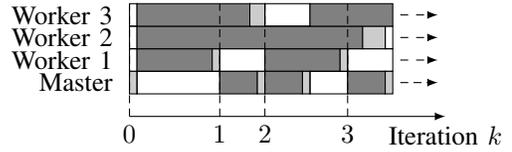
\begin{figure}[t]
\centering
\begin{tikzpicture}
\draw[black] (0,0.4) rectangle (0.1,0.7);
\draw[black,fill=gray!100](0.1,0.4) rectangle (1.1,0.7); 
\draw[black,fill=gray!40] (1.1,0.4) rectangle (1.2,0.7);
\draw[black] (1.2,0.4) rectangle (1.8,0.7);
\draw[black,fill=gray!100](1.8,0.4) rectangle (2.8,0.7);
\draw[black,fill=gray!40](2.8,0.4) rectangle (2.9,0.7);
\draw[black] (2.9,0.4) rectangle (3.5,0.7);
\draw[dashed,->,>=latex] (3.6,0.55) -- (4.1,0.55); 

\draw[black](0,0.7) rectangle (0.1,1);
\draw[black,fill=gray!100](0.1,0.7) rectangle (3.1,1); 
\draw[black,fill=gray!40] (3.1,0.7) rectangle (3.4,1);
\draw[black](3.4,0.7) rectangle (3.5,1);
\draw[dashed,->,>=latex] (3.6,0.85) -- (4.1,0.85); 

\draw[black](0,1) rectangle (0.1,1.3);
\draw[black,fill=gray!100](0.1,1) rectangle (1.6,1.3); 
\draw[black,fill=gray!40] (1.6,1) rectangle (1.8,1.3);
\draw[black] (1.8,1) rectangle (2.4,1.3);
\draw[black,fill=gray!100](2.4,1) rectangle (3.5,1.3);
\draw[dashed,->,>=latex] (3.6,1.15) -- (4.1,1.15); 

\draw[black,fill=gray!40] (0,0.1) rectangle (0.1,0.4); 
\draw[black] (0.1,0.1) rectangle (1.2,0.4);
\draw[black,fill=gray!100] (1.2,0.1) rectangle (1.7,0.4);
\draw[black,fill=gray!40] (1.7,0.1) rectangle (1.8,0.4);
\draw[black,fill=gray!100] (1.8,0.1) rectangle (2.3,0.4);
\draw[black,fill=gray!40] (2.3,0.1) rectangle (2.4,0.4);
\draw[black] (2.4,0.1) rectangle (2.9,0.4);
\draw[black,fill=gray!100] (2.9,0.1) rectangle (3.4,0.4);
\draw[black,fill=gray!40] (3.4,0.1) rectangle (3.5,0.4);
\draw[densely dashed,->,>=latex] (3.6,0.25) -- (4.1,0.25); 

\node[below] (k1)  at (1.2,-0.2) {$1$};
\node[below] (k2) at (1.8,-0.2) {$2$};
\node[below] (k3) at (2.9,-0.2) {$3$};
\draw[->,>=latex] (0,-0.2) -- (4.2,-0.2) node[below]{Iteration $k$} ;

\draw[densely dashed] (0,-0.2) -- (0,1.3);
\draw[densely dashed] (1.2,-0.2) -- (1.2,1.3);  
\draw[densely dashed] (1.8,-0.2) -- (1.8,1.3);  
\draw[densely dashed] (2.9,-0.2) -- (2.9,1.3);    

\node[below] (axis) at (0,-0.2) {$0$};
\node[left] (M)  at (-0.1,0.25) {Master};
\node[left] (W1) at (-0.1,0.55) {Worker 1};
\node[left] (W2) at (-0.1,0.85) {Worker 2};
\node[left] (W3) at (-0.1,1.15) {Worker 3};

%
\end{tikzpicture}
\caption{Illustration of an asynchronous distributed mechanism (idle time in white, transmission delay in light gray, computation delay in gray). The master node is triggered whenever it has received information from $\nmworker$ workers ($\nmworker~=~1$ in the illustration).}
\label{fig:async}
\end{figure}


\section{Problem formulation} \label{sec:LMM}

The LMM consists in representing each acquisition by a linear combination of the endmembers $\m_r$, which are present in unknown proportions. Assuming the data are composed of $\nendm$ endmembers, where $\nendm$ is \emph{a priori} known, and considering that the image is divided into $\ntime$ subsets of pixels (see Remark~\ref{remark1} for details) to distribute the data between several workers, the LMM can be defined as
\begin{equation} \label{eq:model}
    \Y_\iworker  = \M \A_\iworker + \mathbf{B}_\iworker, \; \iworker \in \{1, \dotsc, \ntime \}
\end{equation}
where $\Y_\iworker = \left[ \mathbf{y}_{1,\iworker},\dotsc,\mathbf{y}_{\nbpix,\iworker} \right]$ is an $\nband \times \nbpix$ matrix whose columns are the spectral signatures acquired for each pixel of the $\iworker$th pixel subset. Note that each group can be assigned a different number of pixels if needed. The columns $\m_r$ of the matrix $\M \in \mathbb{R}^{\nband \times \nendm}$ are the different endmembers, and the columns $\a_{n,\iworker}$ of the abundance matrix $\A_\iworker \in \mathbb{R}^{\nendm \times \nbpix}$ gather the proportion of the endmembers within $\y_{n,\iworker}$. Finally, the matrix $\B_\iworker \in \mathbb{R}^{\nband \times \nbpix}$ represents an additive noise resulting from the data acquisition and the modeling errors. The following constraints, aimed at ensuring a physical interpretability of the results, are usually considered
\begin{equation}
    \label{eq:constraints}
    \A_\iworker \succeq \mathbf{0}_{\nendm ,\nbpix}, \quad  \t{\A_\iworker} \mathbf{1}_\nendm  = \mathbf{1}_\nbpix, \quad \M \succeq \mathbf{0}_{\nband ,\nendm}
\end{equation}
where $\succeq$ denotes a term-wise inequality. Assuming the data are corrupted by a white Gaussian noise leads to the following data fitting term
\begin{equation}
    \label{eq:ft}
    f_\iworker(\A_\iworker,\M) = \frac{1}{2} \Fnorm{\Y_\iworker - \M \A_\iworker}.
\end{equation}
In addition, the constraints summarized in~\eqref{eq:constraints} are taken into account by defining
\begin{equation}
g_\iworker(\A_\iworker) = \iota_{\mathcal{A}_\nbpix} (\A_\iworker)
\end{equation}
\begin{align}
    & \mathcal{A}_\nbpix = \Bigl\{ \X \in \mathbb{R}^{\nendm \times \nbpix} \mid {\X}^T\mathbf{1}_\nendm = \mathbf{1}_\nbpix, \X \succeq \mathbf{0}_{\nendm, \nbpix}  \Bigr\} \\
    & r(\M) = \iota_{ \{\cdot \succeq \mathbf{0}\}} (\M)
\end{align}
where $\iota_\mathcal{S}$ denotes the indicator function of a set $\mathcal{S}$ ($\iota_\mathcal{S} (\mathbf{x}) = \mathbf{0}$ if $\mathbf{x} \in \mathcal{S}$, $+\infty$ otherwise). This leads to the following optimization problem
\begin{equation} \label{eq:problem}
    (\A^*, \M^*) \in \argmin{\A,\M} \Psi(\A, \M)
\end{equation}
with
\begin{align}
    & \Psi(\A,\M) = F(\A,\M) + G(\A) + r(\M) \\
    & F(\A,\M) = \sum_{\iworker = 1}^\nworker f_\iworker(\A_\iworker, \M) , \quad G(\A) = \sum_{\iworker = 1}^\nworker g_\iworker (\A_\iworker).
\end{align}

With these notations, $\A_\iworker$ denotes a  \emph{local} variable (i.e., which will be accessed by a single worker), and $\M$ is a global variable (i.e., shared between the different workers, see Fig.~\ref{fig:architecture}). More generally, $f_\iworker$ plays the role of a data fitting term, whereas $g_{\iworker}$ and $r$ can be regarded as regularizers or constraints. The structure of the proposed unmixing algorithm, inspired by~\cite{Cannelli2016}, is detailed in the following section.

\begin{remark} \label{remark1}
    In the initial formulation of the mixing model~\eqref{eq:model}, the indexes $\iworker$ and $\nworker$ refer to subsets of pixels. A direct interpretation of this statement can be obtained by dividing a unique (and possibly large) hyperspectral image into $\nworker$ non-overlapping tiles of smaller (and possibly different) sizes. In this case, each tile is individually unmixed by a given worker. Another available interpretation allows multitemporal analysis to be conducted. Indeed, in practice, distributed unmixing procedures are of particular interest when considering the unmixing of a sequence of several HS images, acquired by possibly different sensors at different dates, but sharing the same materials~\cite{Henrot2016,Thouvenin2015b,Yokoya2017}. In this case, $\iworker$ and $\nworker$ could refer to time instants. Each worker $\iworker$ is then dedicated to the unmixing of a unique HS image acquired at a given time instant. The particular applicative challenge of distributed unmixing of multitemporal HS images partly motivates the numerical experiments on synthetic (yet realistic) and real data presented hereafter.
\end{remark}

\begin{remark}
Even if the work reported in this work has been partly motivated by the particular application of HS unmixing, the problem formulated in this section is sufficiently generic to encompass a wider class of matrix factorization tasks, as those encountered in audio processing \cite{Fevotte2007}, machine learning \cite{Tan2013,Gao2014}.
\end{remark}


\section{A partially asynchronous unmixing algorithm} \label{sec:algorithm}

    \subsection{Algorithm description}
    
    \begin{figure}
\centering
    \begin{tikzpicture}
        \node[master] (M)at(0,0){Master};
        \node[worker,below left = 1 and 2 of M.center] (W1) {Worker 1};
        \node[worker,below = 1 of M.center] (W2) {Worker 2};
        \node[worker,below right = 1 and 2 of M.center] (W3) {Worker 3};
        \node[local_variable,below = 0.5 of W1] (V1) {$f_1, g_1, \A_1$}; 
        \node[local_variable,below = 0.5 of W2] (V2) {$f_2, g_2, \A_2$};
        \node[local_variable,below = 0.5 of W3] (V3) {$f_3, g_3, \A_3$};
        \node[local_variable,above = 0.5 of M] (Z) {$F, G, r, \M$}; 
        %
        \draw[<->,>=latex,dashed] (M.west) to[in=90,out=-180] (W1.north);
        \draw[<->,>=latex,dashed] (M.south) -- (W2.north);
        \draw[<->,>=latex,dashed] (M.east) to[in=90,out=0] (W3.north);
        %
        \draw[very thick] (W1.south) --  (V1.north);
        \draw[very thick] (W2.south) --  (V2.north);
        \draw[very thick] (W3.south) --  (V3.north);
        \draw[very thick] (M.north) --  (Z.south);
    \end{tikzpicture}
    \caption{Illustration of the master-slave architecture considered for the unmixing problem \eqref{eq:problem} with $\Omega = 3$ workers (the function and variables available at each node are given in light gray rectangles).}
    \label{fig:architecture}
\end{figure}
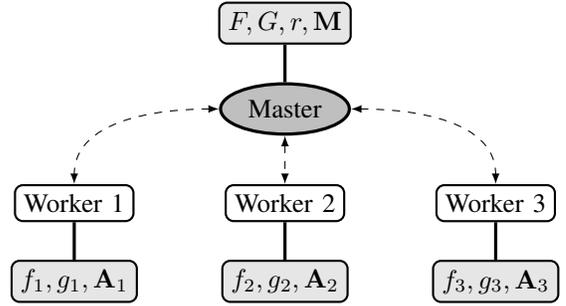

Reminiscent of~\cite{Chang2016}, the proposed algorithm relies on a star topology configuration in which a master node supervises an optimization task distributed between several workers. The master node also updates and transmits the endmember matrix $\M$ shared by the different workers. In fact, the computation time of synchronous algorithms is essentially conditioned by the speed of the slowest worker (see Figs.~\ref{fig:sync} and~\ref{fig:async}). Consequently, relaxing the synchronization requirements (by allowing bounded delays between the information brought by each worker) allows a significant decrease in the computation time to reach convergence, which can scale almost linearly with the number of workers~\cite{Chang2016,Davis2016}. Note that, even though asynchronous optimization schemes may require more iterations than their synchronous counterparts to reach a given precision, allowing more frequent updates generally compensates this drawback in terms of computation time~\cite{Chang2016}.

In the partially asynchronous setting considered, the master node updates the variable shared by the workers once it has received information from at least $\nmworker \ll \ntime$ workers. The new state of the shared variable $\M$ is then transmitted to the $\nmworker$ available workers, which can individually proceed to the next step. As in~\cite{Cannelli2016}, a relaxation step with decreasing stepsizes ensures the convergence of the algorithm (see Algo.~\ref{alg:master}). In order to clarify to which extent the convergence analysis introduced in~\cite{Cannelli2016} is applicable to the present setting, we consider $\nmworker = 1$ in the rest of this paper. However, other values of $K$ could be considered without loss of generality. Details on the operations performed by the master node and each worker are detailed in Algos.~\ref{alg:master} and~\ref{alg:worker} respectively.

\begin{remark}
\begin{enumerate}[label=(\alph*)]
\item The parameter $\gamma_k$ is essentially instrumental to ensure the global convergence of the partially asynchronous unmixing algorithm described in this work, following the general framework introduced in \cite{Cannelli2016}. For simplicity, we have directly adopted the expression proposed in \cite{Scutari2017} \cite[Assumption D., p. 18]{Cannelli2016} which has been reported to yield satisfactory results in practice \cite{Scutari2017}. Evaluating the practical interest of different expressions for the relaxation parameters in terms of the convergence speed of the algorithm is an interesting prospect, which is however beyond the scope of this paper.
\item Note that a synchronous distributed counterpart of Algo.~\ref{alg:master} can be easily derived for Problem~\eqref{eq:problem}, which partly justifies the form chosen for Algo.~\ref{alg:master}. This version consists in setting $\gamma_k = 1$, and waiting for the updates performed by all the workers (i.e., $\nmworker = \nworker$, see~\ref{alg:wait_step} of Algo.~\ref{alg:master}) before updating the shared variable $\M$. This implementation will be taken as a reference to evaluate the computational efficiency of the proposed algorithm in Sections~\ref{sec:exp} and~\ref{sec:real_exp}. 
\end{enumerate}
\end{remark}

\begin{algorithm}[!t]
\footnotesize{
 \KwData{$\x^{(0)}$, $\z^{(0)}$, $\gamma_0 \in (0,1]$, $\mu \in (0,1)$, $\niter$, $K$.}
Broadcast $\z^{(0)}$ to the $\ntime$ workers \;
$k \leftarrow 0$ \;
$\mathcal{T}_k \leftarrow \emptyset$ \;
\While{$k < \niter$}{
	\nlset{Step 1}\label{alg:wait_step}Wait for $\xhtk^k$ from any worker \;
	$\mathcal{T}_k = \mathcal{T}_k \cup \{ \iworker^k \}$ \;
   	$d_\iworker^{k+1} = \left\{
   	\begin{array}{l}
		0 \text{ if } \iworker \in \mathcal{T}_k \\
		d_\iworker^k + 1 \text{ otherwise}
   	\end{array} \right.$\medskip\;
   	$\x_\iworker^{k+1} = \left\{ \begin{array}{l}
   	\x_\iworker^k + \gamma_k (\hat{\x}_\iworker - \x_\iworker^k) \text{ if } \iworker \in \mathcal{T}_k \\
   	\x_\iworker^k \text{ otherwise}
   	\end{array} \right.$\medskip\;
   	\If{$(\sharp \mathcal{T}_k < K)$}{
   		Go to step \ref{alg:wait_step} \tcp*{wait until $\sharp \mathcal{T}_k \geq K$}
   	}
   	\Else{
   		$ \displaystyle \hat{\z}^k \in \prox_{c_{\z}^k r} \Bigl( \z^k + \frac{1}{c_{\z}^k} \nabla_{\z} F(\x^{k+1},\z^k) \Bigr)$\;
   		$\z^{k+1} = \hat{\z}^k + \gamma_k (\hat{\z}^k - \z^k)$\;
   		$\gamma_{k+1} = \gamma_k (1 - \mu\gamma_k)$\;
   		$\mathcal{T}_{k+1} \leftarrow \emptyset$ \;
   		$k \leftarrow k+1$\;
   	}
}
\KwResult{$\x^{\niter}$, $\z^{\niter}$.}
}
\caption{Master node update.} 
\label{alg:master}
\end{algorithm}

    \subsection{Parameter estimation} \label{sec:parameter_estimation}

A direct application of the algorithm described in Algo.~\ref{alg:worker} under the constraints~\eqref{eq:constraints} leads to the following update rule for the abundance matrix $\A_{\iworker^k}$
\begin{align}
    \label{eq:update_A}
    \widehat{\A}_{\iworker^k}^k &= \prox_{\iota_{\mathcal{A}_\nbpix}} \biggl( \A_\iworker^{k} - \frac{1}{c_{\Atk}^k} \nabla_{\A_\iworker} f_\iworker \bigl(\A_{\iworker^k}^k,\M^{k - \dtk} \bigr)\biggr)
\end{align}
where $\prox_{\iota_{\mathcal{A}_\nbpix}}$ denotes the proximal operator of the indicator function $\iota_{\mathcal{A}_\nbpix}$ (see, e.g., \cite{Combettes2011}), and
\begin{equation}
    \nabla_{\A_\iworker} f_\iworker(\A_\iworker,\M) = \t{\M} \bigl(\M\A_\iworker - \Y_\iworker \bigr).
\end{equation}
The step-size $c_{\Atk}^k$  is chosen as in the standard PALM algorithm, i.e.,
\begin{equation}
    c_{\Atk}^k = L_{\Atk}^k = \enorm{\t{(\M^{k - \dtk})} \M^{k - \dtk}}
\end{equation}
where $L_{\Atk}^k$ denotes the Lipschitz constant of $\nabla_{\A_\iworker} f_\iworker(\cdot,\M^{k-\dtk})$ (see \cite[Remark 4 (iv)]{Bolte2013}).
Note that the projection $\prox_{\iota_{\mathcal{A}_\nbpix}}(\cdot)$ can be exactly computed (see~\cite{Duchi2008,Condat2015} for instance). Similarly, the update rule for the endmember matrix $\M$ is
\begin{equation}
    \label{eq:update_M}
    \widehat{\M}^k = \prox_{\iota_{\{ \cdot \succeq \mathbf{0} \}}} \biggl(\M^k - \frac{1}{c_\M^k} \nabla_{\M} F\bigl( \A^{k+1},\M^k \bigr) \biggr)
\end{equation}
with
\begin{equation}
    \nabla_{\M} F \bigl(\A,\M \bigr) = \sum_\iworker (\M \A_\iworker - \Y_\iworker) \t{\A_\iworker}
\end{equation}
\begin{equation}
    c_\M^k = L_\M^k = \enorm{\sum_\iworker \A_\iworker^{k+1} \t{(\A_\iworker^{k+1})}}.
\end{equation}
and $L_\M^k$ is the Lipschitz constant of $\nabla_{\M} F \bigl(\A^k,\cdot \bigr)$.

\begin{algorithm}[!t]
\footnotesize{
\KwData{$\tilde{\z}$, $\tilde{\x}_\iworker$.}
\Begin{
	 Wait for $(\tilde{\z},\tilde{\x}_\iworker)$ from the master node\;  	
	     $\displaystyle \hat{\x}_\iworker \in \prox_{c_{\x_\iworker} g_\iworker} \Bigl( \tilde{\x}_\iworker - \frac{1}{c_{\x_\iworker}} \nabla_{\x_\iworker} f_\iworker \bigl(\tilde{\x}_\iworker,\tilde{\z} \bigr) \Bigr)$\;
	Transmit $\hat{\x}_\iworker$ to the master node\;		  		
}
\KwResult{$\hat{\x}_\iworker$.}
}
\caption{$\iworker$th worker update (since the shared variable $\z$ may have been updated by the master node in the meantime, $\tilde{\z}$ corresponds to a possibly delayed version of the current $\z^k$). From the master point of view, $\tilde{\z} = \z^{k-d_\iworker^k}$.}
\label{alg:worker}
\end{algorithm} 

    \subsection{Convergence guarantees} \label{subsec:convergence}

In general, the proposed algorithm requires the following assumptions, based on the convergence results given in \cite[Theorem 1]{Bolte2013} and \cite[Theorem 1]{Cannelli2016}.

\begin{assumption}[Algorithmic assumption] \label{alg_assumption}
    Let $(\iworker_k,\dtk) \in \{1,\dotsc,\ntime \}\times \{1,\dotsc,\tau \}$ denote the couple composed of the index of the worker transmitting information to the master at iteration $k$, and the delay between the (local) copy $\tilde{\M}^k$ of the endmember matrix $\M$ and the current state $\M^k$ (i.e., $\tilde{\M}^k \triangleq \M^{k - \dtk}$). The allowable delays $\dtk$ are assumed to be bounded by a constant $\tau \in \mathbb{N}^*$. In addition, each couple $(\iworker_k,\dtk)$ represents a realization of a random vector within the probabilistic model introduced in \cite[Assumption C]{Cannelli2016}.
\end{assumption}

\begin{assumption}[Inherited from PALM~\cite{Bolte2013}] \label{assumption}~
    \begin{enumerate}[label=(\roman*)]
        \item For any $\iworker \in \{1, \dotsc, \ntime\}$, $g_\iworker : \mathbb{R}^{\nendm \times \nbpix} \rightarrow(-\infty, +\infty]$ and $r : \mathbb{R}^{\nband \times \nendm} \rightarrow(-\infty, +\infty]$ are proper, convex lower semi-continuous (l.s.c.) functions; \label{assumption_convexity}
        \item For $\iworker \in \{1, \dotsc, \ntime\}$, $f_\iworker : \mathbb{R}^{\nendm \times \nbpix} \times \mathbb{R}^{\nband \times \nendm} \rightarrow \mathbb{R}$ is a $\mathcal{C}^1$ function, and is convex with respect to each of its variables when the other is fixed;
        \item $\Psi$, $f_\iworker$, $g_\iworker$, and $r$ are lower bounded, i.e., $\inf_{\mathbb{R}^{\nendm \times \nbpix} \times \mathbb{R}^{\nband \times \nendm}} \Psi > -\infty$, $\inf_{\mathbb{R}^{\nendm \times \nbpix} \times \mathbb{R}^{\nband \times \nendm}} f_\iworker > -\infty$, $\inf_{\mathbb{R}^{\nendm \times \nbpix}} g_\iworker > -\infty$, and $\inf_{\mathbb{R}^{\nband \times \nendm}} r > -\infty$;
        \item $\Psi$ is a coercive semi-algebraic function (see~\cite{Bolte2013}); \label{assumption_coercivity}
        \item For all $\iworker \in \{1, \dotsc, \ntime\}$, $\z \in \mathbb{R}^{\nband \times \nendm}$, $\x_\iworker~\mapsto~f_\iworker(\x_\iworker,\z)$ is a $\mathcal{C}^1$ function, and the partial gradient $\nabla_{\x_\iworker} f_\iworker(\cdot,\z)$ is Lipschitz continuous with Lipschitz constant $L_{\x_\iworker}(\z)$. Similarly, $\z \mapsto f_\iworker(\x_\iworker,\z)$ is a $\mathcal{C}^1$ function, and the partial gradient $\nabla_{\z} f_\iworker(\x_\iworker,\cdot)$ is Lipschitz continuous, with constant $L_{\z,\iworker}(\x_\iworker)$; \label{assumption:partial_grad}
        \item the Lipschitz constants used in the algorithm, i.e., $L_{\xtk^k}(\tilde{\z}^k)$ and $L_{\z,\iworker_k}(\xhtk^k)$ (denoted by $L_{\xtk^k}^k$ and $L_{\z,\iworker_k}^k$ in the following) are bounded, i.e. there exists appropriate constants such that for all iteration index $k$ \label{assumption_lip}
        \begin{equation*}
            0 < L_{\x}^- \leq L_{\xtk}^k \leq L_{\x}^+, \quad 0 < L_{\z}^- \leq L_{\z,\iworker^k}^k \leq L_{\z}^+.
        \end{equation*}
        \item $\nabla F$ is Lipschitz continuous on bounded subsets. \label{assumption_lip_bounded}
    \end{enumerate}
\end{assumption}

\begin{assumption}[Additional assumptions] \label{assumption2}
    \begin{enumerate}[label=(\roman*)]
        \item For all $\iworker \in \{1, \dotsc, \ntime\}$, $\x_\iworker \in \mathbb{R}^{\nendm \times \nbpix}$,  $\nabla_{\x_\iworker} f_\iworker(\x_\iworker,\cdot)$ is Lipschitz continuous with Lipschitz constant $\Ldxtz(\x_\iworker)$;
        \item The Lipschitz constants $\Ldxtkz(\xhtk^k)$ (denoted by $\Ldxtkz^k$ in the following) is bounded, i.e. there exists appropriate positive constants such that for all $k \in \mathbb{N}$:\label{assumption2_lip}
            \begin{equation*}
            0 < \Ldxz^- \leq \Ldxtkz^k \leq \Ldxz^+.
            \end{equation*}
    \end{enumerate}
\end{assumption}

Assumption~\ref{alg_assumption} summarizes standard algorithmic assumptions to ensure the convergence of Algo.~\ref{alg:master}. Besides, Assumption~\ref{assumption} gathers requirements of the traditional PALM algorithm~\cite{Bolte2013}, under which the distributed synchronous version of the proposed algorithm can be ensured to converge.

Note that the non-convex problem~\eqref{eq:problem} obviously satisfies Assumptions~\ref{assumption} to~\ref{assumption2} for the functions defined in Section~\ref{sec:LMM} (see~\cite{Bolte2013} for examples of semi-algebraic functions). In particular, the bounds on the Lipschitz constants involved in Assumptions~\ref{assumption}\ref{assumption_lip_bounded} and~\ref{assumption2}\ref{assumption2_lip} are satisfied in practice, considering the fact that hyperspectral unmixing is generally conducted on reflectance data (implying $\Y_\omega \in [0, 1]^{\nband \times \nbpix}$), and given the constraints imposed on $\A_\omega$ and $\M$ respectively.


Under Assumptions~\ref{alg_assumption} to~\ref{assumption2}, the analysis led in~\cite{Cannelli2016} allows the following convergence result to be satisfied.

\begin{proposition} \label{prop_stochastic}
    Suppose that Problem~\eqref{eq:problem} satisfies the requirements specified in Assumptions~\ref{alg_assumption} to~\ref{assumption2}. Define the sequence $\{ \v^k \}_{k \in \mathbb{N}}$ of the iterates generated by Algos.~\ref{alg:master} and~\ref{alg:worker}, with $\v^k \triangleq (\A^k,\M^k)$ and the parameters in Algo.~\ref{alg:worker} chosen as
    \begin{equation*}
        c_{\xtk}^k = \Lxtk^k, \quad c_\z^k = \Lz^k.
    \end{equation*}
    Then, the following convergence results are obtained:
    \begin{enumerate}[label=(\roman*)]
        \item the sequence $\{ \Psi(\v^k) \}_{k \in \mathbb{N}}$ converges almost surely;
        \item every limit point of the sequence $\{ \v^k \}_{k \in \mathbb{N}}$ is a critical point of $\Psi$ almost surely.
    \end{enumerate}
\end{proposition}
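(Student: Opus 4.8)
The plan is to prove Proposition~\ref{prop_stochastic} by showing that Problem~\eqref{eq:problem}, together with Algos.~\ref{alg:master} and~\ref{alg:worker}, is a particular instance of the asynchronous block-coordinate framework of~\cite{Cannelli2016}; the two claims then follow directly from~\cite[Theorem~1]{Cannelli2016}. The argument is therefore essentially a verification that every hypothesis of that theorem is implied by Assumptions~\ref{alg_assumption}--\ref{assumption2} and by the particular form of the updates, so that no new estimate is needed beyond this bookkeeping.

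First, I would recast Problem~\eqref{eq:problem} in the block-variable form used in~\cite{Cannelli2016}: the $\ntime+1$ optimization blocks are $\A_1,\dotsc,\A_\ntime$ and $\M$; the smooth (generally non-convex) coupling term is $F(\A,\M)=\sum_\iworker f_\iworker(\A_\iworker,\M)$, which is $\mathcal{C}^1$ by Assumption~\ref{assumption}; and the non-smooth term $G(\A)+r(\M)=\sum_\iworker g_\iworker(\A_\iworker)+r(\M)$ is block-separable, proper, l.s.c.\ and convex by Assumption~\ref{assumption}\ref{assumption_convexity}. I would then identify the worker update~\eqref{eq:update_A} (resp.\ the master update~\eqref{eq:update_M}) with the best-response step of~\cite{Cannelli2016} associated with the surrogate obtained by linearizing $f_\iworker(\cdot,\ztk)$ at $\A_\iworker^k$ and adding $\tfrac{1}{2}\Lxtk^k\Fnorm{\cdot-\A_\iworker^k}$ (resp.\ by linearizing $F(\A^{k+1},\cdot)$ at $\M^k$ and adding $\tfrac{1}{2}\Lz^k\Fnorm{\cdot-\M^k}$): these PALM-type surrogates are uniformly strongly convex, have uniformly Lipschitz gradients, and are gradient-consistent with $F$ at the reference iterate, hence are admissible in the sense of~\cite{Cannelli2016}. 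Finally, the relaxation step with diminishing stepsize $\gamma_k$ and the recursion $\gamma_{k+1}=\gamma_k(1-\mu\gamma_k)$ with $\gamma_0\in(0,1]$, $\mu\in(0,1)$ match~\cite[Assumption~D]{Cannelli2016} (equivalently~\cite{Scutari2017}) and in particular give $\gamma_k\downarrow0$, $\sum_k\gamma_k=+\infty$ and $\sum_k\gamma_k^2<+\infty$.

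Next, I would discharge the remaining structural requirements. Assumption~\ref{alg_assumption} supplies the bounded-delay condition $\dtk\le\tau$ and places the sequence $(\iworker^k,\dtk)$ inside the probabilistic read-and-update model of~\cite[Assumption~C]{Cannelli2016}, so that almost surely every block is selected infinitely often and the underlying filtration is well defined; the star topology with $\nmworker=1$ realizes precisely the inconsistent-read instance treated there. The block-partial Lipschitz property of $\nabla F$ required in~\cite{Cannelli2016} is Assumption~\ref{assumption}\ref{assumption:partial_grad}, while the decisive point that the per-iteration descent constants $\Lxtk^k$, $\Lz^k$ (and $\Ldxtkz^k$) are uniformly bounded along the run is granted by Assumptions~\ref{assumption}\ref{assumption_lip}, \ref{assumption}\ref{assumption_lip_bounded} and~\ref{assumption2}\ref{assumption2_lip} (consistent with the iterates remaining in a bounded set, as discussed after those assumptions). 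Coercivity and semi-algebraicity of $\Psi$ (Assumption~\ref{assumption}\ref{assumption_coercivity}) then provide, together with the sufficient-decrease inequality established in~\cite{Cannelli2016}, the almost sure boundedness of $\{\v^k\}$ (hence the existence of limit points) and the Kurdyka--{\L}ojasiewicz property. With all hypotheses verified, \cite[Theorem~1]{Cannelli2016} yields (i) the almost sure convergence of $\{\Psi(\v^k)\}$ and (ii) that every limit point of $\{\v^k\}$ is almost surely a critical point of $\Psi$; in the synchronous special case ($\gamma_k\equiv1$, $\nmworker=\nworker$) the same conclusions alternatively follow from~\cite[Theorem~1]{Bolte2013}.

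The step I expect to be the main obstacle is the faithful translation of the master--worker mechanism of Algos.~\ref{alg:master}--\ref{alg:worker} into the abstract asynchronous model of~\cite{Cannelli2016}. One must check carefully that updating a worker's block $\A_{\iworker^k}$ \emph{and} the shared block $\M$ within the same outer iteration $k$, with the worker having read the delayed copy $\widetilde{\M}^k=\M^{k-\dtk}$, is an admissible trajectory of that model---in particular that the delay variables and the selected-index sequence satisfy the measurability and ``every block updated infinitely often'' conditions underlying the stochastic analysis---and that the PALM-type surrogates meet the strong-convexity, gradient-consistency and uniform-Lipschitz requirements on surrogates imposed there with constants independent of $k$. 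Once this is settled, the convergence analysis is entirely inherited from~\cite{Cannelli2016} and the proof is complete.
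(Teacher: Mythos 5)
Your overall strategy---recast Problem~\eqref{eq:problem} as an instance of the asynchronous block framework of~\cite{Cannelli2016}, verify its hypotheses from Assumptions~\ref{alg_assumption}--\ref{assumption2}, and invoke \cite[Theorem~1]{Cannelli2016}---is the same reduction the paper performs, and your hypothesis checklist (block-separable non-smooth part, PALM-type strongly convex surrogates, diminishing stepsize rule, bounded delays, bounded Lipschitz constants, coercive semi-algebraic $\Psi$) is accurate. The difference is in what each account makes explicit. You correctly flag the translation of the master--worker mechanism into the abstract model as the main obstacle, but you leave the one genuinely asynchronous piece of analysis entirely to the citation: how the stale-gradient error caused by the worker reading $\tilde{\z}^k = \z^{k-\dtk}$ instead of $\z^k$ is controlled. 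The paper's sketch does exactly this part by hand. Its Lemma~\ref{lemma1} derives a per-iteration descent inequality for $\Psi$ in which the asynchronicity shows up as a residual $\frac{1}{2}\tau \Ldxz^+ \sum_{q=k-\tau+1}^{k}\norm{\z^q-\z^{q-1}}^2$, obtained by bounding the cross term through the Lipschitz continuity of $\nabla_{\x_\iworker} f_\iworker(\x_\iworker,\cdot)$ (Assumption~\ref{assumption2}) and the bound $\norm{\z^k-\tilde{\z}^k}^2 \le \tau\sum_{q=k-\tau+1}^{k}\norm{\z^q-\z^{q-1}}^2$; its Lemma~\ref{lemma2} then absorbs this residual into the weighted auxiliary Lyapunov function $\Phi$ of \cite[Proof of Theorem~5.1]{Davis2016}, which is non-increasing and from which the rest of the argument of~\cite{Cannelli2016} proceeds. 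Your proposal buys brevity and a cleaner "instance of a known theorem" presentation; the paper's buys a concrete demonstration that the delayed reads really do fit the framework, at the cost of reproducing two lemmas. Neither is wrong, but if you keep your route you should at least state that the surrogate gradient consistency must hold at the \emph{delayed} point $\tilde{\z}^k$ and explain (or cite precisely) the auxiliary-function device that turns the resulting bounded-delay residual into a monotone quantity---without it, the sufficient-decrease step you appeal to does not hold for $\Psi$ itself.
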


\begin{proof}
See sketch of proof in the Appendix.
\end{proof}

The convergence analysis is conducted using an auxiliary function (introduced in Lemma~\ref{lemma2} in Appendix) to handle asynchronicity~\cite{Davis2016}. The resulting convergence guarantees then allow convergence results associated with the original problem~\eqref{eq:problem} to be recovered.

Besides, the following result ensures a stronger convergence guarantee for the synchronous counterpart of Algo.~\ref{alg:master}.

\begin{proposition}[Finite length property, following from~\cite{Bolte2013}] \label{prop_deterministic}
    Suppose that Problem~\eqref{eq:problem} satisfies the requirements specified in Assumptions~\ref{assumption} to~\ref{assumption2}. Define the sequence $\{ \v^k \}_{k \in \mathbb{N}}$ of the iterates generated by the synchronous version of Algo.~\ref{alg:master}, with $\v^k \triangleq (\x^k,\z^k)$ and
    \begin{equation*}
        c_{\xtk}^k = \Lxtk^k, \quad c_\z^k = \Lz^k, \quad \gamma_k = 1, \quad \nmworker = \nworker.
    \end{equation*}
    Then, the following properties can be proved:
    \begin{enumerate}[label=(\roman*)]
        \item the sequence $\{ \v^k \}_{k \in \mathbb{N}}$ has finite length
        \begin{equation*}
            \begin{split}
                &\sum_{k=1}^{+\infty} \norm{\v^{k+1} - \v^k} < +\infty
            \end{split}
        \end{equation*}
        where
        \begin{equation*}
            \norm{\v^{k+1} - \v^k} = \sqrt{\Fnorm{\A^{k+1} - \A^k} + \Fnorm{\M^{k+1} - \M^k}};
        \end{equation*}
        \item the sequence $\{ \v^k \}_{k \in \mathbb{N}}$ converges to a critical point of $\Psi$.
    \end{enumerate}
\end{proposition}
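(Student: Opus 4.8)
The plan is to recognise the synchronous instance of Algo.~\ref{alg:master} (obtained by setting $\gamma_k = 1$ and $\nmworker = \nworker$) as the PALM algorithm of~\cite{Bolte2013} applied to $\Psi$ regarded as a function of the $\nworker+1$ blocks $(\A_1,\dotsc,\A_\nworker,\M)$, and then to invoke~\cite[Theorem~1]{Bolte2013}. First I would observe that, since $G(\A)=\sum_\iworker g_\iworker(\A_\iworker)$ is block-separable and each smooth coupling term $f_\iworker$ depends only on $\A_\iworker$ and $\M$, the updates~\eqref{eq:update_A} of the local blocks (performed in parallel in the synchronous setting) are equivalent to a Gauss--Seidel sweep over $\A_1,\dotsc,\A_\nworker$; together with the update~\eqref{eq:update_M} of $\M$, this is exactly one PALM iteration on $\Psi$, with step-sizes $c_{\xtk}^k=\Lxtk^k$ and $c_\z^k=\Lz^k$ taken proportional to the relevant block Lipschitz constants. (Assumption~\ref{assumption2}, which controls the cross-Lipschitz constants needed only in the asynchronous analysis, is not required for this synchronous statement and is listed only for consistency with Proposition~\ref{prop_stochastic}.)

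The next step is to check the hypotheses of~\cite[Theorem~1]{Bolte2013}. By Assumption~\ref{assumption}\ref{assumption_convexity} and~(iii), $G$ and $r$ are proper, convex, l.s.c.\ and lower bounded, and $\Psi$ is lower bounded; by Assumption~\ref{assumption}\ref{assumption_coercivity}, $\Psi$ is coercive and semi-algebraic, hence an l.s.c.\ Kurdyka--{\L}ojasiewicz function, which is the structural property at the heart of the PALM convergence proof. By Assumption~\ref{assumption}(ii) and~(v), each $f_\iworker$ is $\mathcal{C}^1$ with Lipschitz partial gradients, so $F$ is $\mathcal{C}^1$, with $\nabla_{\A_\iworker}F(\cdot,\M)$ Lipschitz of constant $\enorm{\t{\M}\M}$ and $\nabla_\M F(\A,\cdot)$ Lipschitz of constant $\enorm{\sum_\iworker \A_\iworker \t{\A_\iworker}}$; by Assumptions~\ref{assumption}\ref{assumption_lip} and~\ref{assumption}\ref{assumption_lip_bounded}, these constants (hence the step-sizes) stay bounded away from $0$ and $+\infty$ along the iterates, and $\nabla F$ is Lipschitz on bounded subsets.

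The conclusion then follows from the three standard PALM ingredients. (a)~\emph{Sufficient decrease:} the descent lemma applied to each $f_\iworker$, combined with the optimality of the proximal maps in~\eqref{eq:update_A}--\eqref{eq:update_M} and the fact that the step-sizes dominate the corresponding Lipschitz constants, yields $\Psi(\v^{k+1}) \le \Psi(\v^k) - \tfrac{\rho}{2}\norm{\v^{k+1}-\v^k}^2$ for some $\rho>0$; summing over $k$ and using that $\Psi$ is lower bounded gives $\sum_k \norm{\v^{k+1}-\v^k}^2 < +\infty$, whence $\norm{\v^{k+1}-\v^k}\to 0$, and coercivity makes $\{\v^k\}$ bounded with nonempty compact limit-point set. (b)~\emph{Subgradient estimate:} the same optimality conditions and the Lipschitz-on-bounded-sets property yield $\w^{k+1}\in\partial\Psi(\v^{k+1})$ with $\norm{\w^{k+1}}\le b\,\norm{\v^{k+1}-\v^k}$. (c)~\emph{KL argument:} applying the uniformised KL inequality on the (compact, connected) limit-point set and combining it with (a) and (b) upgrades square-summability to $\sum_k \norm{\v^{k+1}-\v^k} < +\infty$, which is the finite-length property~(i) of the statement; it follows that $\{\v^k\}$ is Cauchy, hence convergent, and, $\partial\Psi$ being closed with $\norm{\w^{k+1}}\to 0$, its limit is a critical point of $\Psi$, which is~(ii).

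I expect the only delicate point to be the first one: carefully verifying that the synchronous algorithm, in which the blocks $\A_1,\dotsc,\A_\nworker$ are updated \emph{simultaneously}, each with its own step-size $\Lxtk^k$, genuinely reproduces the per-iteration sufficient decrease of PALM (originally proved for a sequential Gauss--Seidel sweep), and that all the Lipschitz constants remain uniformly bounded away from $0$ and $+\infty$ along the trajectory --- which is exactly where Assumptions~\ref{assumption}\ref{assumption_lip}, \ref{assumption}\ref{assumption_lip_bounded} and the boundedness of $\{\v^k\}$ (itself a consequence of coercivity and the monotone decrease of $\Psi$) are used. Everything else is a direct transcription of the argument in~\cite{Bolte2013}.
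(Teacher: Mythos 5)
Your proposal is correct and follows essentially the same route as the paper, whose proof is a one-line appeal to \cite[Theorems 1 and 3, Remark 4 (iv)]{Bolte2013}: you simply unfold that citation by checking the PALM hypotheses and rehearsing the sufficient-decrease/subgradient-bound/KL argument. Your observation that the parallel updates of $\A_1,\dotsc,\A_\nworker$ coincide with a Gauss--Seidel sweep (because $g_\iworker$ is block-separable and $\nabla_{\A_\iworker}F$ does not depend on the other local blocks), and that Assumption~\ref{assumption2} is not actually needed in the synchronous case, are both accurate and make explicit what the paper leaves implicit.
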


\begin{proof}
These statements result from a direct application of \cite[Theorem 1, Theorem 3]{Bolte2013} and \cite[Remark 4 (iv)]{Bolte2013}.
\end{proof}

Note that an additional volume regularization can be considered, as long as it satisfies the conditions given in~Assumption~\ref{assumption}, and more specifically the convexity Assumption~\ref{assumption}\ref{assumption_convexity}. For instance, the mutual distance between the endmembers introduced in \cite{Berman2004} can be easily accounted for.

\section{Experiments with synthetic data} \label{sec:exp}

To illustrate the interest of the allowed asynchronicity, we compare the estimation performance of Algo.~\ref{alg:master} to the performance of its synchronous counterpart (described in Section~\ref{sec:algorithm}), and evaluate the resulting unmixing performance in comparison with three unmixing methods proposed in the literature. We propose to consider the context of multitemporal HS unmixing, which is of particular interest for recent remote sensing applications~\cite{Henrot2016,Thouvenin2015b,Yokoya2017}. For this application, a natural way of distributing the data consists in assigning a single HS image to each worker.
To this end, we generated synthetic data composed of $\ntime = 3$ HS images resulting from linear mixtures of $\nendm \in \left\{3, 6, 9\right\}$ endmembers acquired in $\nband = 413$ bands. The generated abundance maps vary smoothly over time (i.e., from one image to another) to reproduce a realistic evolution of the scene of interest. As in \cite[Section V]{Thouvenin2018}, the abundance maps were obtained by multiplying reference abundance coefficients with trigonometric functions to ensure a sufficiently smooth temporal evolution. For the dataset with $\nendm = 3$, the reference abundance map was obtained by unmixing the Moffett scene (same area as in \cite{Dobigeon2009}). For the datasets composed of $\nendm \in \{6, 9\}$ endmembers, we directly used the synthetic abundance maps introduced in \cite{Plaza2011} as a reference\footnote{Abundance maps available at \url{http://www.umbc.edu/rssipl/people/aplaza/fractals.zip}.}. Each image, composed of $10,000$ pixels, was then corrupted by an additive white Gaussian noise whose variance ensures a signal-to-noise ratio (SNR) of $30$ dB.

Note that the distributed methods were run on a single computer for illustration purposes using the built-in low level distributed computing instructions available in Julia \cite{Bezanson2017} (which provide an MPI-like interface). In this case, the workers are independent processes. 

As is common with many blind unmixing algorithms, the performance of the proposed approach is expected to be limited in cases where the initial endmember matrix does not properly represent the observed materials. This observation essentially results from the nonconvex nature of the problem presently addressed, and is not specific to the proposed approach. To the best of the authors' knowledge, no blind unmixing algorithm can systematically ensure the convergence of the generated iterates to a ``satisfactory'' critical point of the objective function in cases where the initialization is relatively poor.

\begin{figure*}[thp]
\centering
\foreach \i [count = \r] in {3,6,9} {
	\begin{subfigure}[t]{0.32\textwidth}
		\centering
		\includegraphics[keepaspectratio,width=0.99\textwidth]{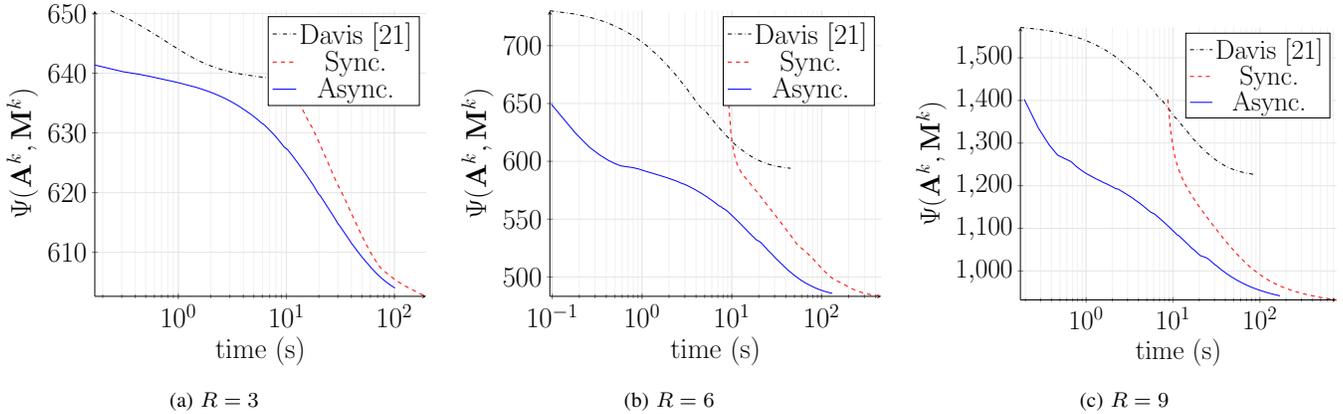}
		\caption{$\nendm = \i$}
		\label{fig:f_r\i}
	\end{subfigure}
	}
\caption{Evolution of the objective function for the synthetic datasets, obtained for Algo.~\ref{alg:master} and its synchronous version until convergence (model \eqref{eq:model}).}
\label{fig:objective}
\end{figure*}


    \subsection{Compared methods}

The estimation performance of the proposed algorithm has been compared to those of several unmixing methods from the literature. Note that only the computation times associated with Algo.~\ref{alg:master} and its synchronous version, implemented in Julia~\cite{Bezanson2017}, can lead to a consistent comparison in this experiment. Indeed, some of the other unmixing methods have been implemented in \textsc{Matlab} by their respective authors. In the following lines, implementation details specific to each of these methods are given.
\begin{enumerate}
    \item VCA/FCLS: the endmembers are first extracted on each image using the vertex component analysis (VCA) \cite{Nascimento2005}, which requires pure pixels to be present. The abundances are then estimated for each pixel by solving a fully constrained least squares problem (FCLS) using the ADMM algorithm described in~\cite{Bioucas2010};
    \item SISAL/FCLS: the endmembers are extracted on each image by the simplex identification via split augmented Lagrangian (SISAL) \cite{Bioucas2009}, and the abundances are estimated for each pixel by FCLS. The tolerance for the stopping rule is set to $10^{-4}$;
    \item Proposed method (referred to as ASYNC): the endmembers are initialized with the signatures obtained by VCA on the first image of the sequence, and the abundances are initialized by FCLS. The synchronous and asynchronous algorithms are stopped when the relative decrease of the objective function between two consecutive iterations is lower than $10^{-5}$, with a maximum of 100 and 500 iterations respectively. Its synchronous counterpart is referred to as SYNC. The relaxation parameter $\gamma_k$ $(k \in \mathbb{N}^*)$ is updated as in \cite{Cannelli2016} with $\gamma_0 = 1$ and $\mu = 10^{-6}$ (see Algo. ~\ref{alg:master}). In the absence of any temporal or spatial reglarization, the lexicographically ordered pixels composing the datasets are evenly distributed between $\nworker = 3$ workers;
    \item DAVIS~\cite{Davis2016}: this asynchronous algorithm only differs from the previous algorithm, in that no relaxation step is considered, and in the expression of the descent stepsize used to ensure the algorithm convergence. To ensure a fair comparison, it has been run in the same setting as the proposed asynchronous method;
    \item DSPLR~\cite{Tsinos2017}: the DSPLR algorithm is considered with the stopping criterion proposed in~\cite{Tsinos2017} (set to $\varepsilon = 10^{-5}$), with a maximum of 100 iterations. The same initialization as the two previous distributed algorithms is used.
\end{enumerate}

\setlength\columnsep{0.1pt}
\begin{table}[t!] 
\centering
\caption{Simulation results on synthetic data (GMSE($\mathbf{A}$)$\times 10^{-3}$, RE $\times 10^{-4}$).}
	\begin{center}
	\resizebox{0.48\textwidth}{!}{%
		\begin{tabular}{@{}lllccccc@{}} \toprule
&	Algorithm	&   	   & aSAM($\M$) (\textdegree) & GMSE($\A$) & RE & aSAM($\Y$) (\textdegree) & time (\si{\second}) \\ \cmidrule{1-8}
\multirow{6}{*}{\rotatebox{90}{$\nendm = 3$}}
&VCA/FCLS &\cite{Nascimento2005}      & 1.82 & 1.27 & 0.64 & 1.45 & \textbf{1} \\
&SISAL/FCLS	&\cite{Bioucas2009}   & 1.55 & 0.94 & 0.62 & 1.43 & 2   \\
&DSPLR &\cite{Tsinos2017} &	0.84 &	2.76 &	\textbf{0.59} &	\textbf{1.41} &	139 \\
&DAVIS &\cite{Davis2016} &	1.44 &	0.92 &	0.63 &	1.45 &	10 \\ 
&SYNC &	&\textbf{0.76} &	\textbf{0.33} &	0.60 &	\textbf{1.41} &	197 \\
&ASYNC &	&0.85 &	0.38 &	0.60 &	\textbf{1.41} &	101 \\
\cmidrule{1-8}
\multirow{6}{*}{\rotatebox{90}{$\nendm = 6$}}
&VCA/FCLS &\cite{Nascimento2005}      & 2.55 & 1.08 & 1.11 & 1.64 & \textbf{1} \\
&SISAL/FCLS &\cite{Bioucas2009}	   & 1.65 & 0.50 & \textbf{0.91} & 1.53 & 2.5 \\
&DSPLR &\cite{Tsinos2017} &	3.64 &	4.65 &	7.73 &	\textbf{1.45} &	116 \\
&DAVIS &\cite{Davis2016} &	1.87 &	1.22 &	0.96 &	1.58 &	45.3 \\
&SYNC &	&\textbf{0.63} &	\textbf{0.28} &	\textbf{0.78} &	\textbf{1.45} &	462 \\
&ASYNC &	&1.09 &	0.59 &	0.81 &	1.48 &	46 \\
\cmidrule{1-8}
\multirow{6}{*}{\rotatebox{90}{$\nendm = 9$}}
&VCA/FCLS  &\cite{Nascimento2005}    & 3.07 & 2.59 & 6.75 & 2.37 & \textbf{2} \\
&SISAL/FCLS	&\cite{Bioucas2009}   & 2.17 & 1.77 & 5.11 & 2.14 & 4   \\
&DSPLR &\cite{Tsinos2017} &	8.52 &	6.53 &	1.48 &	\textbf{1.56} &	153 \\
& DAVIS &\cite{Davis2016} &	1.57 &	1.27 &	1.98 &	1.69 &	84 \\ 
&SYNC &	&\textbf{0.87} &	\textbf{0.40} &	\textbf{1.50} &	1.57 &	762 \\
&ASYNC &	&0.88 &	0.54 &	1.52 &	1.58 &	170 \\
\bottomrule
		\end{tabular}
}
	\end{center}
\label{tab:results_synth}
\end{table}


The estimation performance reported in Table~\ref{tab:results_synth} are evaluated in terms of
\begin{enumerate}[label=(\roman*)] 
    \item endmember estimation and spectral reconstruction through the average spectral angle mapper (aSAM)
\end{enumerate}
\begin{equation}
    \aSAM(\M) = \frac{1}{\nendm } \sum_{r=1}^\nendm   \arccos \left( \frac{ \mathbf{m}_r^T \widehat{\mathbf{m}}_r }{ \lVert \mathbf{m}_r \rVert_2 \lVert \widehat{\mathbf{m}}_r \rVert_2 } \right)
\end{equation}
\begin{equation} \label{eq:aSAM_Y}
    \aSAM(\Y) = \frac{1}{\nbpix \ntime} \sum_{n,\iworker} \arccos \left( \frac{ \mathbf{y}_{n,\iworker}^T \bigl(\widehat{\M} \hat{\a}_{n,\iworker} \bigr) }{ \lVert \mathbf{y}_{n,\iworker} \rVert_2 \lVert \widehat{\M} \hat{\a}_{n,\iworker} \rVert_2 } \right);
\end{equation}
\begin{enumerate}[label=(\roman*),resume]
    \item abundance estimation through the global mean square error (GMSE) \vspace{-0.2cm}
    \begin{align}
        \GMSE(\A)  & = \frac{1}{\ntime \nendm \nbpix} \sum_{\iworker=1}^\ntime \lVert \A_\iworker - \widehat{\A}_\iworker \rVert_{\text{F}}^2;
    \end{align}
    \item quadratic reconstruction error (RE)
    \begin{align}
        \label{eq:RE}
        \RE &= \frac{1}{\ntime \nband \nbpix} \sum_{\iworker=1}^\ntime \lVert \Y_\iworker - \widehat{\M} \widehat{\A}_\iworker \rVert_{\text{F}}^2.
    \end{align}
\end{enumerate}

\def\names{{04/10/2014},{06/02/2014},{09/19/2014},{11/17/2014},{04/29/2015},{10/13/2015}}

\begin{figure*}
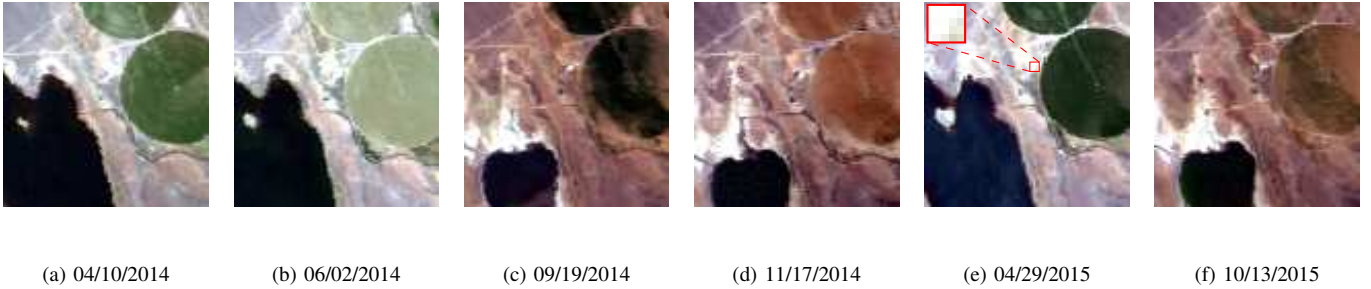

\centering
\foreach \name[count=\j] in \names {
	\def\idx{\the\numexpr\j+3}    
	\ifnum\j=5
		\begin{subfigure}[t]{0.15\textwidth}
			\begin{tikzpicture}
    \begin{scope}[
    node distance = 2.5mm,
        inner sep = 0pt,spy using outlines={rectangle, red, magnification=4, 
        }
                 ]
\node (img)  {\includegraphics[keepaspectratio,width=\textwidth]{\idx}};
\coordinate (F) at (0.1, 0.5);
\coordinate (G) at (-1.5, 1.5);
\spy [size=5mm] on (F) in node[below right=of G.north east];
\end{scope}
\draw[dashed,red] (tikzspyonnode.north east) -- (tikzspyinnode.north east);
\draw[dashed,red] (tikzspyonnode.south west) -- (tikzspyinnode.south west);
    \end{tikzpicture}
    		\caption{\name}
    		\label{fig:cube\j}
    	\end{subfigure}	
	\else
		\begin{subfigure}[t]{0.15\textwidth}
			\includegraphics[keepaspectratio,width=\textwidth]{\idx}
			\caption{\name}
			\label{fig:cube\j}
		\end{subfigure}	
	\fi
}
\caption{Mud lake dataset used in the MTHS experiment with the corresponding acquisition dates. The area delineated in red in Fig.~\ref{fig:cube5} highlights a region known to contain outliers (this observation results from a previous analysis led on this dataset in \cite{Thouvenin2015b}).}
\label{fig:cube}
\end{figure*}

	\subsection{Results}

The results reported in Table~\ref{tab:results_synth} correspond to a single trial of the different algorithms. More precisely, the results reported for VCA/FCLS are representative of the results obtained over multiple runs, which have not been observed to vary significantly from one run to another. A similar observation has been made for multiple runs of the asynchronous algorithms (ASYNC and DAVIS) whose performance does not change significantly over different runs for the simulation setting adopted in this paper, both in terms of estimation accuracy and computation time.

\begin{itemize}
    \item \textbf{Endmember estimation:} the proposed asynchronous algorithm leads to competitive endmember estimation for the three synthetic datasets (in terms of aSAM and RE), notably in comparison with its synchronous counterpart. We can note that the DSPLR algorithm yields interesting estimation results for $\nendm = 3$, which however significantly degrade as $\nendm$ increases. This partly results from the matrix inversions involved in the update steps of~\cite{Tsinos2017}, which remain relatively sensitive to the conditioning of the involved matrices, and consequently to the choice of the regularization parameter of the augmented Lagrangian.
    \item \textbf{Abundance estimation:} the synchronous PALM algorithm leads to the best abundance estimation results, even in the absence of any additional regularization on the spatial distribution of the abundances. In this respect, we can note that the performance of PALM and its asynchronous version is relatively similar, and consistently outperforms the other unmixing methods.
    \item \textbf{Overall performance:} the performance measures reported in Table~\ref{tab:results_synth} show that the proposed distributed algorithm yields competitive estimation results, especially in terms of the required computational time when compared to its synchronous counterpart. To be more explicit, the evolution of the objective function versus the computation time shows the interest of the allowed asynchronicity to speed up the unmixing task, as illustrated in Fig.~\ref{fig:objective} (the computation time required by Algo.~\ref{alg:master} is almost 4 times lower than the one of its synchronous counterpart).
\end{itemize}

Note that even though the SYNC and ASYNC algorithms start from the same initial point, there is no guarantee that both methods converge to the same critical point, which essentially accounts for the differences in the results reported for both methods in Table~\ref{tab:results_synth}. For the asynchronous algorithms, another potential source of variability comes from the variations in the order the updates are performed from one run to another. For the simulation setting adopted in this work, such variations have not been observed to lead to significant differences in the estimation results.

\section{Experiments with real data} \label{sec:real_exp}

In practice, as emphasized earlier, distributed unmixing procedures are of particular interest when considering the unmixing of large HS images, or of a sequence of HS images acquired by possibly different sensors at different time instants~\cite{Henrot2016,Thouvenin2015b,Yokoya2017}, referred to as multitemporal HS (MTHS) images. The unmixing of two large real HS images is first proposed, whereas the application to MTHS images essentially motivates the last example addressed in this section. The experiments have been conducted in the same setting as in the previous section (the pixels composing the datasets are evenly distributed between $\Omega = 3$ workers).

	\subsection{Description of the datasets}

        \paragraph{Cuprite dataset (single HS image)}
the first dataset considered in this work consists of a $190 \times 250$ subset extracted from the popular Cuprite dataset. In this case, reference abundance maps are available from the literature (see for instance~\cite{Nascimento2005,Miao2007}). After removing water-absorption and low SNR bands, $189$ out of the $224$ spectral bands initially available were exploited in the subsequent unmixing procedure. The data have been unmixed with $\nendm = 10$ endmembers based on prior studies conducted on this dataset~\cite{Nascimento2005,Miao2007}.

        \paragraph{Houston dataset (single HS image)}
the second dataset considered hereafter was acquired over the campus of the University of Houston, Texas, USA, in June 2012~\cite{Debes2014}. The $152 \times 108$ scene of interest is composed of $144$ bands acquired in the wavelength range \SIrange{380}{1050}{\nano\metre}. The data have been unmixed with $\nendm = 4$ endmembers based on prior studies conducted on this dataset~\cite{Drumetz2016}.

        \paragraph{Mud lake dataset (MTHS images)}
we finally consider a real sequence of AVIRIS HS images acquired between 2014 and 2015 over the Mud Lake, located in the Lake Tahoe region (California, United States of America)\footnote{The images from which the interest of interest is extracted are freely available from the online AVIRIS flight locator tool at \url{http://aviris.jpl.nasa.gov/alt_locator/}.}. The $100 \times 100$ scene of interest is in part composed of a lake and a nearby field displayed in Fig.~\ref{fig:cube}. The images have been unmixed with $\nendm = 3$ endmembers based on results obtained from prior studies conducted on these data~\cite{Thouvenin2015b,Thouvenin2015TR}, and confirmed by the results of the noise-whitened eigengap algorithm (NWEGA) \cite{Halimi2016} reported in Table~\ref{tab:tab_ega}. After removing the water absorption bands, 173 out of the 224 available spectral bands were finally exploited.

	\subsection{Results}
	
	\begin{table}[h]
\caption{Endmember number $\nendm$ estimated by NWEGA \cite{Halimi2016} on each image of the Mud lake dataset.} 
	\begin{center}
	\resizebox{0.48\textwidth}{!}{%
		\begin{tabular}{@{}lcccccc@{}} \toprule
			& 04/10/2014 & 06/02/2014 & 09/19/2014 & 11/17/2014 & 04/29/2015 & 10/13/2015  \\ \cmidrule{2-7}
NWEGA       & 3 & 3 & 3 & 4 & 3 & 4 \\ \bottomrule	
		\end{tabular}
	}
	\end{center}
\label{tab:tab_ega} \vspace{-0.3cm}
\end{table}

Given the absence of ground truth for the different datasets (except the indications available in the literature for the Cuprite scene~\cite{Nascimento2005,Miao2007}), the estimation results obtained by the proposed algorithms are compared to the other unmixing procedures in terms of the RE and the aSAM introduced in~\eqref{eq:aSAM_Y} and~\eqref{eq:RE} respectively (see Table~\ref{tab:results_real}). The consistency of the estimated abundance maps, reported in Figs.~\ref{fig:A_cuprite} to~\ref{fig:A3_real}, is also considered when analyzing the different results.

	   \paragraph{Cuprite dataset}
except for the DSPLR algorithm, whose scale indeterminacy leads to results somewhat harder to interpret for this dataset, the results obtained by the different methods are relatively similar, both in terms of the estimated abundance maps and the recovered endmembers (see Fig.~\ref{fig:A_cuprite}).

	   \paragraph{Houston dataset}
the distributed algorithms yield abundance maps in agreement with the VCA/FCLS and SISAL algorithms (see Fig.~\ref{fig:A_houston}). We can note that the algorithms SYNC, ASYNC and DSPLR provide a more contrasted abundance map for the concrete than VCA/FCLS, SISAL/FCLS and DAVIS.

	   \paragraph{Mud lake dataset}
the algorithms SYNC, DAVIS~\cite{Davis2016} and ASYNC lead to particularly convincing abundance maps, in the sense that the abundances of the different materials (containing soil, water and vegetation) are consistently estimated for each time instant, contrary to VCA/FCLS, SISAL/FCLS and DSPLR (see Figs.~\ref{fig:A1_real} to~\ref{fig:A3_real}). At $\iworker = 5$, VCA/FCLS and SISAL, which have been applied  individually to each image of the sequence, appear to be particularly sensitive to the presence of outliers in the area delineated in red in Fig.~\ref{fig:cube5} (see~\cite{Thouvenin2015b} for a previous study on this dataset). This observation is further confirmed by the abundance maps reported at $t = 5$ in Figs.~\ref{fig:A1_real} and~\ref{fig:A2_real}, as well as the corresponding endmembers reported in Fig.~\ref{fig:real_endm} (whose amplitude is significantly greater than 1). This sensitivity notably results from the fact that each scene has been analyzed independently from the others in this specific context (note that the results would have been worse if these methods were applied to all the images at once).

	   \paragraph{Global reconstruction performance}
the performance measures reported for the different datasets in Table~\ref{tab:results_real} confirm the interest of the PALM algorithm and its asynchronous variant for unmixing applications. The asynchronous variant can be observed to lead to a notable reduction of the computation time (see also Fig.~\ref{fig:objective_real}), while allowing a reconstruction performance similar to the classical PALM algorithm to be obtained.

\setlength\columnsep{0.1pt} 
\begin{table}[!t] 
\vspace{-0.3cm}
\caption{Simulation results on real data (RE $\times 10^{-4}$).}
	\begin{center}
		\begin{tabular}{@{}lllccc@{}} \toprule
&	Algorithm &  	   & RE & aSAM($\Y$) (\textdegree) & time (\si{\second}) \\ \cmidrule{1-6}
\multirow{6}{*}{\rotatebox{90}{Cuprite}} 
&VCA/FCLS &\cite{Nascimento2005}     &  0.51 & 0.96 & \textbf{2} \\
&SISAL/FCLS	&\cite{Bioucas2009}   &  0.47 & 0.92 & 6 	\\
&DSPLR &\cite{Tsinos2017} &	1.25 &	1.42 &	20.2 \\   
&DAVIS &\cite{Davis2016} &	0.33 &	0.79 &	64.0 \\   
&SYNC &	&\textbf{0.15} &	\textbf{0.55} &	1290 \\
&ASYNC &&	0.30 &	0.77 &	134 \\  
 \cmidrule{1-6}
\multirow{6}{*}{\rotatebox{90}{Houston}} 
&VCA/FCLS &\cite{Nascimento2005}    &  22.5 & 3.31 & \textbf{0.1} \\
&SISAL/FCLS	&\cite{Bioucas2009}   &  21.3 & 2.01 & 0.6 	\\
&DSPLR &\cite{Tsinos2017} &	\textbf{0.13} &	\textbf{0.99} &	51.5 \\  
&DAVIS &\cite{Davis2016} &	14.9 &	2.44 &	22.3 \\  
&SYNC &	&0.21 &	1.14 &	84.6 \\  
&ASYNC &	&0.24 &	1.17 &	24.9 \\ 
\cmidrule{1-6}
\multirow{6}{*}{\rotatebox{90}{Mud lake}} 
&VCA/FCLS  &\cite{Nascimento2005}   &  23.7 & 13.23 & \textbf{1} \\
&SISAL/FCLS	&\cite{Bioucas2009}   &  \textbf{1.65} & \textbf{3.09} & 2 	\\ 
&DSPLR &\cite{Tsinos2017} &	1.93 &	10.9 &	99.6 \\   
&DAVIS &\cite{Davis2016} &	17.61 &	6.27 &	58.9 \\  
&SYNC &	&5.05 &	5.88 &	70.4 \\ 
&ASYNC &	&5.13 &	5.88 &	35.0 \\  
\bottomrule 
		\end{tabular}
	\end{center}
\label{tab:results_real} \vspace{-0.3cm}
\end{table}
\section{Conclusion and future work} \label{sec:conclusion}

This paper focused on a partially asynchronous distributed unmixing algorithm based on recent contributions in non-convex optimization~\cite{Cannelli2016,Chang2016,Davis2016}, which proves convenient to address large scale hyperspectral unmixing problems. Under relatively standard conditions, the proposed approach inherits from the convergence guarantees studied in~\cite{Cannelli2016}, and from those of the traditional PALM algorithm~\cite{Bolte2013,Chouzenoux2016} for its synchronous counterpart. Evaluated on synthetic and real data, the proposed approach provided competitive estimation results, while significantly reducing the computation time to reach convergence. From a computational point of view, implementing a fully functional, large scale asynchronous unmixing algorithm and assessing its scalability with respect to the volume of data involved is an interesting prospect. As with any distributed algorithm, the computation time required by the proposed method
is expected to decrease linearly with the number of workers assigned to the unmixing task until the cost of the master/worker communications is comparable to the cost of the estimation task conducted on each worker. Future research perspectives also include the extension to different network topology as in~\cite{Pesquet2014,Bianchi2016}, or the use of variable metrics as described in~\cite{Repetti2014,Chouzenoux2014,Chouzenoux2016,Frankel2015}.

\begin{appendix}
\section{Convergence proof} \label{sec:cv_proof}
The proposed sketch of proof adapts the first arguments developed in~\cite{Cannelli2016}, in order to clarify that the proposed algorithm fits within this general framework. Note that a similar proof can be obtained by induction when $J$ blocks have to be updated by each worker, and $I$ blocks by the master node (corresponding to the situation described in~\eqref{eq:problem}).

\begin{lemma} \label{lemma1}
Under Assumptions~\ref{alg_assumption} to~\ref{assumption2}, there exists two positive constants $c_\x$ and $c_\z$ such that
\begin{align}
& \Psi(\x^{k+1},\z^{k+1}) \leq \Psi(\x^k,\z^k) \nonumber\\
& - \frac{\gamma_k}{2} \bigl( c_\x - \gamma_k (\Lx^+ + \Ldxz^+) \bigr) \norm{\xhtk^k - \xtk^k}^2  \nonumber\\
& - \frac{\gamma_k}{2} \bigl(c_\z - \gamma_k \Lz^+ \bigr) \norm{\zh^k - \z^k}^2 \nonumber\\
& + \frac{1}{2} \tau \Ldxz^+ \sum_{q = k-\tau+1}^k \norm{\z^q - \z^{q - 1}}^2 \label{eq:lemma1}.
\end{align}
\end{lemma}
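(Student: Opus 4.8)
The plan is to follow the one-step descent analysis of~\cite{Cannelli2016}, specialised to~\eqref{eq:problem}: with $\nmworker=1$, exactly one abundance block $\xtk$ is updated by the reporting worker and the endmember block $\z$ by the master at iteration $k$, so I would first decompose the decrease of $\Psi$ along these two successive updates,
\[
\Psi(\x^{k+1},\z^{k+1}) - \Psi(\x^k,\z^k) = \bigl[\Psi(\x^{k+1},\z^k) - \Psi(\x^k,\z^k)\bigr] + \bigl[\Psi(\x^{k+1},\z^{k+1}) - \Psi(\x^{k+1},\z^k)\bigr],
\]
and bound each bracket. Since only block $\iworker^k$ of $\x$ moves and $F=\sum_\iworker f_\iworker$, $G=\sum_\iworker g_\iworker$, the first bracket equals $[\ftk(\xtk^{k+1},\z^k)+\gtk(\xtk^{k+1})]-[\ftk(\xtk^k,\z^k)+\gtk(\xtk^k)]$ and the second equals $[F(\x^{k+1},\z^{k+1})+r(\z^{k+1})]-[F(\x^{k+1},\z^k)+r(\z^k)]$.

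For the abundance bracket I would combine three ingredients. First, the descent lemma for $\xtk\mapsto\ftk(\xtk,\z^k)$ (a $\mathcal{C}^1$ function with $\Lx^+$-Lipschitz gradient, Assumption~\ref{assumption}\ref{assumption:partial_grad},\ref{assumption_lip}) at the relaxed point $\xtk^{k+1}=(1-\gamma_k)\xtk^k+\gamma_k\xhtk^k$, which produces $\tfrac{\gamma_k^2}{2}\Lx^+\norm{\xhtk^k-\xtk^k}^2$ together with the linear term $\gamma_k\pscalar{\nabla_{\xtk}\ftk(\xtk^k,\z^k),\xhtk^k-\xtk^k}$. Second, the subgradient inequality of the proximity operator evaluated at $\xtk^k$: since $\xhtk^k=\prox_{c_{\xtk}^k\gtk}\bigl(\xtk^k-\tfrac{1}{c_{\xtk}^k}\nabla_{\xtk}\ftk(\xtk^k,\ztk)\bigr)$ with the \emph{delayed} copy $\ztk=\z^{k-\dtk}$ (Algo.~\ref{alg:worker}), it gives $\gtk(\xhtk^k)-\gtk(\xtk^k)\le -c_{\xtk}^k\norm{\xhtk^k-\xtk^k}^2-\pscalar{\nabla_{\xtk}\ftk(\xtk^k,\ztk),\xhtk^k-\xtk^k}$. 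Third, the convexity of $\gtk$ (Assumption~\ref{assumption}\ref{assumption_convexity}), which turns the relaxation into $\gtk(\xtk^{k+1})-\gtk(\xtk^k)\le\gamma_k[\gtk(\xhtk^k)-\gtk(\xtk^k)]$. Adding the first estimate to $\gamma_k$ times the second, the two gradients --- evaluated at the current $\z^k$ and at the delayed $\ztk$ --- do not cancel, leaving the residual $\gamma_k\pscalar{\nabla_{\xtk}\ftk(\xtk^k,\z^k)-\nabla_{\xtk}\ftk(\xtk^k,\ztk),\xhtk^k-\xtk^k}$.

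Absorbing this residual is the crux of the argument and the step I expect to be the main obstacle. The plan is to apply Young's inequality with parameter $\gamma_k\Ldxz^+$, so that the residual splits into $\tfrac{\gamma_k^2}{2}\Ldxz^+\norm{\xhtk^k-\xtk^k}^2$ --- which merges with the descent-lemma quadratic term and produces the coefficient $\Lx^++\Ldxz^+$ in~\eqref{eq:lemma1} --- and $\tfrac{1}{2\Ldxz^+}\norm{\nabla_{\xtk}\ftk(\xtk^k,\z^k)-\nabla_{\xtk}\ftk(\xtk^k,\ztk)}^2$; by the uniform cross-Lipschitz bound $\Ldxtkz^k\le\Ldxz^+$ (Assumption~\ref{assumption2}\ref{assumption2_lip}), this last term is at most $\tfrac{\Ldxz^+}{2}\norm{\z^k-\z^{k-\dtk}}^2$, and the triangle inequality followed by Cauchy--Schwarz, together with $\dtk\le\tau$ (Assumption~\ref{alg_assumption}), yields $\norm{\z^k-\z^{k-\dtk}}^2\le\tau\sum_{q=k-\tau+1}^k\norm{\z^q-\z^{q-1}}^2$ (with the convention $\z^q=\z^0$ for $q\le0$) --- exactly the last term of~\eqref{eq:lemma1}. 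The delicate point is to keep the powers of $\gamma_k$ and the factor $\tau$ straight, and to invoke each Lipschitz constant at the appropriate argument.

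The endmember bracket is handled in the same way, but more easily: the master uses the up-to-date $\z^k$ and the current abundances $\x^{k+1}$, so the descent lemma for $\z\mapsto F(\x^{k+1},\z)$ (gradient $\Lz^+$-Lipschitz, Assumption~\ref{assumption}\ref{assumption_lip}), the subgradient inequality of $\prox_{c_\z^k r}$ at $\z^k$ for the step $\zh^k$ of Algo.~\ref{alg:master} (cf.~\eqref{eq:update_M}), and the convexity of $r$ combine --- with \emph{no} residual term --- to give $-(\gamma_k c_\z^k-\tfrac{\gamma_k^2}{2}\Lz^+)\norm{\zh^k-\z^k}^2$. Summing the two brackets and replacing the iteration-dependent step sizes by fixed constants via $c_{\xtk}^k\ge\Lx^-$ and $c_\z^k\ge\Lz^-$ (Assumption~\ref{assumption}\ref{assumption_lip}) then yields~\eqref{eq:lemma1}, with e.g.\ $c_\x=2\Lx^-$ and $c_\z=2\Lz^-$.
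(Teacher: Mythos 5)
Your proposal is correct and follows essentially the same route as the paper's proof: a two-block decomposition of the descent, the descent lemma plus prox-optimality plus convexity for each block, and absorption of the delayed-gradient residual via the cross-Lipschitz bound, Young's inequality, and the telescoping/Cauchy--Schwarz estimate $\norm{\z^k-\z^{k-\dtk}}^2\le\tau\sum_{q=k-\tau+1}^k\norm{\z^q-\z^{q-1}}^2$. The only cosmetic differences are the order in which Cauchy--Schwarz and Young are applied and your use of the strongly convex form of the prox inequality, which merely changes the admissible constants $c_\x$, $c_\z$.
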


\begin{proof}
\textbf{Step 1:} Assumption~\ref{assumption}\ref{assumption:partial_grad} allows the descent lemma \cite[p. 683]{Bertsekas1999} to be applied to $\z \mapsto F(\x, \z)$, leading to
\begin{align} \label{eq:descent_z}
F(\x^{k+1}, &\z^{k+1}) \leq F(\x^{k+1},\z^k) + \frac{\Lz^k}{2} \norm{\z^{k+1} - \z^k}^2 \nonumber \\
&+ \pscalar{\nabla_{\z} F(\x^{k+1},\z^k), \z^{k+1} - \z^k} 
\end{align}
Thus,
\begin{align}
    &\Psi(\x^{k+1},\z^{k+1}) \leq F(\x^{k+1},\z^k) + G(\x^{k+1}) \nonumber \\
    & \qquad + \pscalar{\nabla_{\z} F(\x^{k+1},\z^k), \z^{k+1} - \z^k}  \nonumber \\
    & \qquad + \frac{\Lz^k}{2} \norm{\z^{k+1} - \z^k}^2 + r(\z^{k+1}) \nonumber
\end{align}
    
\begin{align}
    & = \ftk(\xtk^{k+1},\z^{k}) + \gtk(\xtk^{k+1}) + \frac{\Lz^k}{2} \norm{\z^{k+1} - \z^k}^2 \nonumber \\
    & \quad + \sum_{q \neq \omega^k} f_q(\x_q^{k},\z^k) + g_q(\x_q^{k}) + r(\z^{k+1})  \nonumber \\
    & \quad + \pscalar{\nabla_{\z} F(\x^{k+1},\z^k), \z^{k+1} - \z^k}.
\end{align}
Since $\z^{k+1} = \z^k + \gamma^k \bigl( \zh^k - \z^k  \bigr)$, we further have
\begin{align} \label{eq:inequality}
\Psi(\x^{k+1},& \z^{k+1}) \leq \ftk(\xtk^{k+1},\z^{k}) + \gtk(\xtk^{k+1}) \nonumber \\
& \quad + \sum_{q \neq \omega^k} f_q(\x_q^{k},\z^k) + g_q(\x_q^{k}) + r(\z^{k+1})  \nonumber \\
& \quad + \pscalar{\nabla_{\z} F(\x^{k+1},\z^k), \z^{k+1} - \z^k} \nonumber \\
& \quad + \frac{(\gamma^k)^2\Lz^k}{2} \norm{\zh^{k} - \z^k}^2.
\end{align}
In addition, the optimality of $\zh^k$ implies
\begin{equation} \label{eq:optimality_z}
\begin{split}
r(\zh^k) &+ \frac{c_{\z}^k}{2} \norm{\zh^k - \z^k}^2 \\
& + \pscalar{\nabla_{\z} F(\x^{k+1},\z^k), \zh^k - \z^k}  \leq r(\z^k)
\end{split}
\end{equation}
and the convexity of $r$ leads to
\begin{equation} \label{eq:convexity_r}
r(\z^{k+1}) \leq r(\z^k) + \gamma_k \bigl( r(\zh^k) -  r(\z^k) \bigr).
\end{equation}
Combining~\eqref{eq:convexity_r},~\eqref{eq:optimality_z} and exploiting the expression $\z^{k+1} = \z^k + \gamma^k \bigl( \zh^k - \z^k  \bigr)$ leads to
\begin{align}
    r(\z^{k+1}) &\leq r(\z^k) + \gamma_k \bigl( r(\zh^k) -  r(\z^k) \bigr) \nonumber \\
    \text{(from \eqref{eq:optimality_z})} & \leq r(\z^k) - \frac{\gamma^k c_{\z}^k}{2} \norm{\zh^{k} - \z^k}^2 \nonumber \\
    & - \gamma^k \pscalar{\nabla_{\z} F(\x^{k+1},\z^k), \zh^k - \z^k}. \label{eq:stepz}
\end{align}
Combining~\eqref{eq:stepz} and~\eqref{eq:inequality} finally results in
\begin{align} \label{eq:sufficient_decrease_z}
    &\Psi(\x^{k+1},\z^{k+1}) \leq \ftk(\xtk^{k+1},\z^{k}) + \gtk(\xtk^{k+1}) \nonumber \\
    & \quad + r(\z^{k}) + \sum_{q \neq \omega^k} f_q(\x_q^{k},\z^k) + g_q(\x_q^{k}) \nonumber \\
    & \quad - \frac{\gamma_k}{2} (c_{\z}^k - \gamma^k \Lz^k) \norm{\zh^{k} - \z^k}^2.
\end{align}

\textbf{Step 2:} Arguments similar to those used in Step 1 above lead to
\begin{align} \label{eq:step_x}
    & \ftk (\xtk^{k+1}, \z^k) + \gtk (\xtk^{k+1}) \leq \ftk (\xtk^k, \z^k) & \nonumber \\
    & + \bigl\langle \nabla_{\xtk} \ftk (\xtk^k,\z^k) - \nabla_{\xtk} \ftk (\xtk^k,\zh^k), \xtk^{k+1} - \xtk^k \bigr\rangle & \nonumber \\
    & - \frac{\gamma_k}{2} \bigl( c_{\xtk}^k - \gamma_k \Lxtk^k \bigr) \norm{\xhtk^k - \xtk^k}^2 + \gtk(\xtk^k). &
\end{align}
Since $\nabla_{\x_\omega} f_\omega (\x_\omega,\cdot)$ is assumed to be Lipschitz continuous (see Assumption~\ref{assumption2}\ref{assumption2_lip}), we have
\begin{equation*}
\begin{split}
&\bigl\langle \nabla_{\xtk} \ftk (\xtk^k,\z^k) - \nabla_{\xtk} \ftk (\xtk^k,\zh^k), \xtk^{k+1} - \xtk^k \bigr\rangle \\
& \quad \leq  \Ldxz^k \norm{\z^k - \zh^k} \norm{\xtk^{k+1} - \xtk^k}
\end{split}
\end{equation*}
which, combined with~\eqref{eq:step_x}, leads to
\begin{equation} \label{eq:sufficient_decrease_x}
\begin{split}
& \ftk (\xtk^{k+1}, \z^k) + \gtk (\xtk^{k+1}) \leq \ftk (\xtk^k, \z^k) \\
& \quad + \Ldxz^k \norm{\z^k - \zh^k} \norm{\xtk^{k+1} - \xtk^k} + \gtk(\xtk^k)  \\
& \quad - \frac{\gamma_k}{2} \bigl( c_{\xtk}^k - \gamma_k \Lxtk^k \bigr) \norm{\xhtk^k - \xtk^k}^2 .
\end{split}
\end{equation}

\textbf{Step 3:} From this point, the product involving $\norm{\z^k - \tilde{\z}^k}$ in~\eqref{eq:sufficient_decrease_x} can be bounded as proposed in \cite[Theorem 5.1]{Davis2016}. To this end, we first note that
\begin{align} \label{eq:step1}
    & \Ldxz^k \norm{\z^k - \zh^k} \norm{\xtk^{k+1} - \xtk^k} \nonumber \\
    & \leq \frac{\Ldxz^k}{2} \norm{\z^k - \zh^k}^2 + \frac{\Ldxz^k}{2} \norm{\xtk^{k+1} - \xtk^k}^2 \nonumber \\
    & = \frac{\Ldxz^k}{2} \norm{\z^k - \zh^k}^2 + \frac{\Ldxz^k \gamma_k^2}{2} \norm{\xhtk^k - \xtk^k}^2  \nonumber \\
    & \text{(using $\xtk^{k+1} = \xtk^{k} + \gamma_k (\xhtk^{k} - \xtk^{k})$)}. 
\end{align}
Besides, using the fact that $\dtk \leq \tau$ for any index $k$ (see Assumption~\ref{alg_assumption}), we have
\begin{align} \label{eq:step2}
&\norm{\z^k - \tilde{\z}^k}^2 = \norm{\sum_{q = k-\dtk+1}^k (\z^q - \z^{q - 1})}^2 \nonumber \\
& \leq \tau \sum_{q = k-\tau+1}^k \norm{\z^q - \z^{q - 1}}^2.
\end{align}
Combining~\eqref{eq:sufficient_decrease_x}, \eqref{eq:step1}, and~\eqref{eq:step2} then leads to
\begin{align} \label{eq:step3}
    \begin{split}
    & \ftk (\xtk^{k+1}, \z^k) + \gtk (\xtk^{k+1}) \leq \ftk (\xtk^k, \z^k) \\
    & - \frac{\gamma_k}{2} \bigl( c_{\xtk}^k - \gamma_k (\Lxtk^k + \Ldxz^k )  \bigr) \norm{\xhtk^k - \xtk^k}^2 \\
    & + \tau \Ldxz^k \sum_{q = k-\tau+1}^k \norm{\z^q - \z^{q - 1}}^2 + \gtk(\xtk^k) .
    \end{split}
\end{align}
\textbf{Step 4:} Combining~\eqref{eq:sufficient_decrease_z}, \eqref{eq:step3} and using the bounds on the different Lipschitz constants introduced in Assumptions~\ref{assumption}\ref{assumption_lip} and~\ref{assumption2}\ref{assumption2_lip} finally leads to the announced result.
\end{proof}

According to Lemma~\ref{lemma1}, the objective function $\Psi$ is not necessarily decreasing from an iteration to another due to the presence of a residual term involving $\tau$ past estimates of $\z$. From this observation, an auxiliary function (whose derivation is reproduced in Lemma~\ref{lemma2} for the sake of completeness) has been proposed in~\cite{Davis2016}. The introduction of such a function, which is eventually non-increasing between two consecutive iterations, is of particular interest for the convergence analysis. This function finally allows convergence guarantees related to the original problem~\eqref{eq:problem} to be recovered.

\begin{lemma}[Auxiliary function definition, adapted from \protect{\cite[Proof of Theorem 5.1]{Davis2016}}] \label{lemma2}
Under the same assumptions as in Lemma~\ref{lemma1}, let $\Phi$ be the function defined by
\begin{align}
&\Phi \bigl( \x(0),\z(0),\z(1),\dotsc,\z(\tau) \bigr) = \Psi\bigl( \x(0),\z(0) \bigr) \nonumber \\
& + \frac{\beta}{2} \sum_{q = 1}^\tau (\tau - q + 1) \norm{\z(q) - \z(q-1)}^2
\end{align}
with $\beta = \tau \Ldxz^+$.
Let $\w^k = (\x^k,\z^k,\check{\z}^k)$ and $\check{\z}^k = (\z^{k-1},\dotsc,\z^{k-\tau})$ for any iteration index $k \in \mathbb{N}$ (with the convention $\z^{q} = \z^0$ if $q < 0$). Then,
\begin{align} \label{eq:eq_lemma2}
& \Phi(\w^{k+1}) \leq \Phi(\w^k) \nonumber \\
& - \frac{\gamma_k}{2} \bigl( c_{\x} - \gamma_k (\Lx^+ + \Ldxz^+) \bigr) \norm{\xhtk^k - \xtk^k}^2  \nonumber\\
& - \frac{\gamma_k}{2} \bigl(c_{\z} - \gamma_k (\Lz^+ + \tau^2 \Ldxz^+ ) \bigr) \norm{\zh^k - \z^k}^2 .
\end{align}
\end{lemma}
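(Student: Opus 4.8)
The plan is to derive the telescoping-type inequality \eqref{eq:eq_lemma2} directly from Lemma~\ref{lemma1} by absorbing the troublesome residual term $\tfrac{1}{2}\tau\Ldxz^+\sum_{q=k-\tau+1}^k\norm{\z^q-\z^{q-1}}^2$ into the weighted sum $\tfrac{\beta}{2}\sum_{q=1}^\tau(\tau-q+1)\norm{\z(q)-\z(q-1)}^2$ appended to $\Psi$. First I would write out $\Phi(\w^{k+1})-\Phi(\w^k)$ explicitly: this equals $\Psi(\v^{k+1})-\Psi(\v^k)$ plus the difference of the two weighted sums evaluated at the shifted tuples $\check{\z}^{k+1}=(\z^k,\dotsc,\z^{k-\tau+1})$ and $\check{\z}^k=(\z^{k-1},\dotsc,\z^{k-\tau})$. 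The key combinatorial identity is that shifting the index in the weighted sum turns $\sum_{q=1}^\tau(\tau-q+1)\norm{\z^{k+1-q}-\z^{k-q}}^2 - \sum_{q=1}^\tau(\tau-q+1)\norm{\z^{k-q+1}-\z^{k-q}}^2$ — wait, more carefully: the weight on $\norm{\z^{k+1-j}-\z^{k-j}}^2$ changes from $(\tau-j)$ (its weight in the sum at step $k$, where it sits in position $q=j+1$ relative to the new top index... ) — the point is that each increment-squared term $\norm{\z^{m}-\z^{m-1}}^2$ for $m$ in the window sees its coefficient increase by exactly $\beta/2$ when we pass from step $k$ to step $k+1$, except the newest term $\norm{\z^{k+1}-\z^k}^2$ which enters with coefficient $\tfrac{\beta}{2}\tau$, and the oldest term $\norm{\z^{k+1-\tau}-\z^{k-\tau}}^2$ which exits. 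Carrying out the arithmetic, the difference of weighted sums equals $\tfrac{\beta}{2}\tau\norm{\z^{k+1}-\z^k}^2 - \tfrac{\beta}{2}\sum_{q=k-\tau+1}^{k}\norm{\z^q-\z^{q-1}}^2$.

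Next I would substitute this together with the bound on $\Psi(\v^{k+1})-\Psi(\v^k)$ from Lemma~\ref{lemma1}. The residual $+\tfrac{1}{2}\tau\Ldxz^+\sum_{q=k-\tau+1}^k\norm{\z^q-\z^{q-1}}^2$ in \eqref{eq:lemma1} is exactly $+\tfrac{\beta}{2}\sum_{q=k-\tau+1}^k\norm{\z^q-\z^{q-1}}^2$ since $\beta=\tau\Ldxz^+$, so it cancels precisely against the $-\tfrac{\beta}{2}\sum$ coming from the weighted-sum difference. What survives on the $\z$ side is $-\tfrac{\gamma_k}{2}(c_\z-\gamma_k\Lz^+)\norm{\zh^k-\z^k}^2$ from Lemma~\ref{lemma1} plus the new term $+\tfrac{\beta}{2}\tau\norm{\z^{k+1}-\z^k}^2$. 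Using $\z^{k+1}-\z^k=\gamma_k(\zh^k-\z^k)$, the latter is $+\tfrac{\beta}{2}\tau\gamma_k^2\norm{\zh^k-\z^k}^2=+\tfrac{\tau^2\Ldxz^+}{2}\gamma_k^2\norm{\zh^k-\z^k}^2$, and regrouping gives $-\tfrac{\gamma_k}{2}\bigl(c_\z-\gamma_k(\Lz^++\tau^2\Ldxz^+)\bigr)\norm{\zh^k-\z^k}^2$, matching the claimed coefficient. The $\x$-term $-\tfrac{\gamma_k}{2}(c_\x-\gamma_k(\Lx^++\Ldxz^+))\norm{\xhtk^k-\xtk^k}^2$ is carried over unchanged from Lemma~\ref{lemma1}, which yields exactly \eqref{eq:eq_lemma2}.

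The main obstacle I anticipate is bookkeeping the index shift in the weighted sum correctly — in particular tracking which increment term gets which coefficient when the window $\{k-\tau+1,\dotsc,k\}$ slides to $\{k-\tau+2,\dotsc,k+1\}$, handling the boundary convention $\z^q=\z^0$ for $q<0$ so that the "exiting" term $\norm{\z^{k-\tau+1}-\z^{k-\tau}}^2$ is legitimately dropped (it appears with a nonnegative coefficient on the left, so discarding it only weakens the inequality in the right direction), and making sure no off-by-one error creeps into the coefficient $\tau$ on the newest term. Once that algebra is pinned down, everything else is a direct substitution of the Lipschitz-constant bounds from Assumptions~\ref{assumption}\ref{assumption_lip} and~\ref{assumption2}\ref{assumption2_lip} that were already used to close Lemma~\ref{lemma1}. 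I would also remark that the constants $c_\x,c_\z$ here are the same ones furnished by Lemma~\ref{lemma1}, so no new quantities are introduced.
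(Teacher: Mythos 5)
Your proposal is correct and follows essentially the same route as the paper: the paper introduces the auxiliary quantity $\alpha^k = \sum_{q=k-\tau+1}^{k}(q-k+\tau)\norm{\z^q-\z^{q-1}}^2$ and uses exactly your identity $\alpha^k-\alpha^{k+1}=\sum_{q=k-\tau+1}^{k}\norm{\z^q-\z^{q-1}}^2-\tau\norm{\z^{k+1}-\z^k}^2$ to cancel the residual from Lemma~\ref{lemma1} and absorb the new term via $\z^{k+1}-\z^k=\gamma_k(\zh^k-\z^k)$. The index bookkeeping and the resulting coefficient $c_\z-\gamma_k(\Lz^++\tau^2\Ldxz^+)$ in your write-up match the paper's computation.
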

\begin{proof}
The expression of the auxiliary function proposed in~\cite{Davis2016} results from the following decomposition of the residual term $\sum_{q = k - \tau + 1}^k \norm{\z^q - \z^{q-1}}^2$. Introducing the auxiliary variables
\begin{equation*}
\alpha^k = \sum_{q = k-\tau+1}^k (q - k + \tau) \norm{\z^q - \z^{q - 1}}^2
\end{equation*}
we can note that
\begin{equation} \label{eq:diff_alpha}
\alpha^k - \alpha^{k+1} = \sum_{q = k-\tau+1}^k \norm{\z^q - \z^{q - 1}}^2 - \tau \norm{\z^{k+1} - \z^k}^2.
\end{equation}
Thus, using the upper bound $\Ldxz^k \leq \Ldxz^+$ (Assumption~\ref{assumption2}\ref{assumption2_lip}) and replacing~\eqref{eq:diff_alpha} in~\eqref{eq:lemma1} yields
\begin{align*}
& \Psi(\x^{k+1},\z^{k+1}) + \beta \alpha^{k+1}\leq \Psi(\x^k,\z^k) + \beta \alpha^k \\
& - \frac{\gamma_k}{2} \bigl( c_\x - \gamma_k (\Lx^+ + \Ldxz^+) \bigr) \norm{\xhtk^k - \xtk^k}^2  \nonumber\\
& - \frac{\gamma_k}{2} \bigl(c_{\z} - \gamma_k (\Lz^+ + \tau^2 \Ldxz^+ ) \bigr) \norm{\zh^k - \z^k}^2.
\end{align*}
Observing that $\Phi(\w^k) = \Psi(\x^{k},\z^{k}) + \alpha^{k}$ finally leads to the announced result.
\end{proof}

The previous lemma makes clear that the proposed algorithm can be studied as a special case of~\cite{Cannelli2016}. The rest of the convergence analysis, which involves somewhat convoluted arguments, exactly follows~\cite{Cannelli2016} up to minor notational modifications.

\end{appendix}

\begin{figure*}
\centering
\includegraphics[keepaspectratio,width=0.97\textwidth]{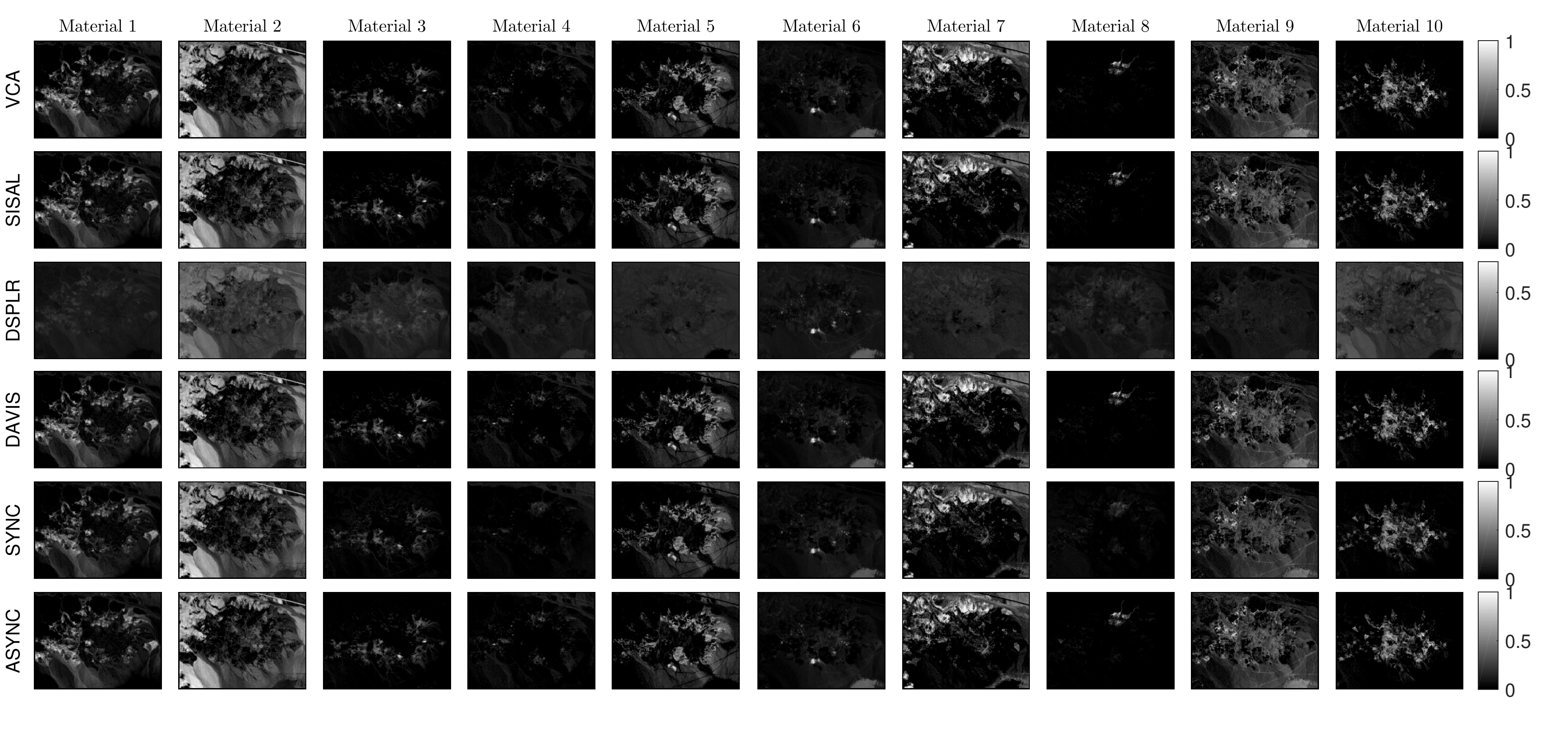}
\caption{Abundance maps recovered by the different methods (in each row) for the Cuprite dataset.}
\label{fig:A_cuprite}
\end{figure*}

\begin{figure}[t!]
\centering
\includegraphics[keepaspectratio,width=0.47\textwidth]{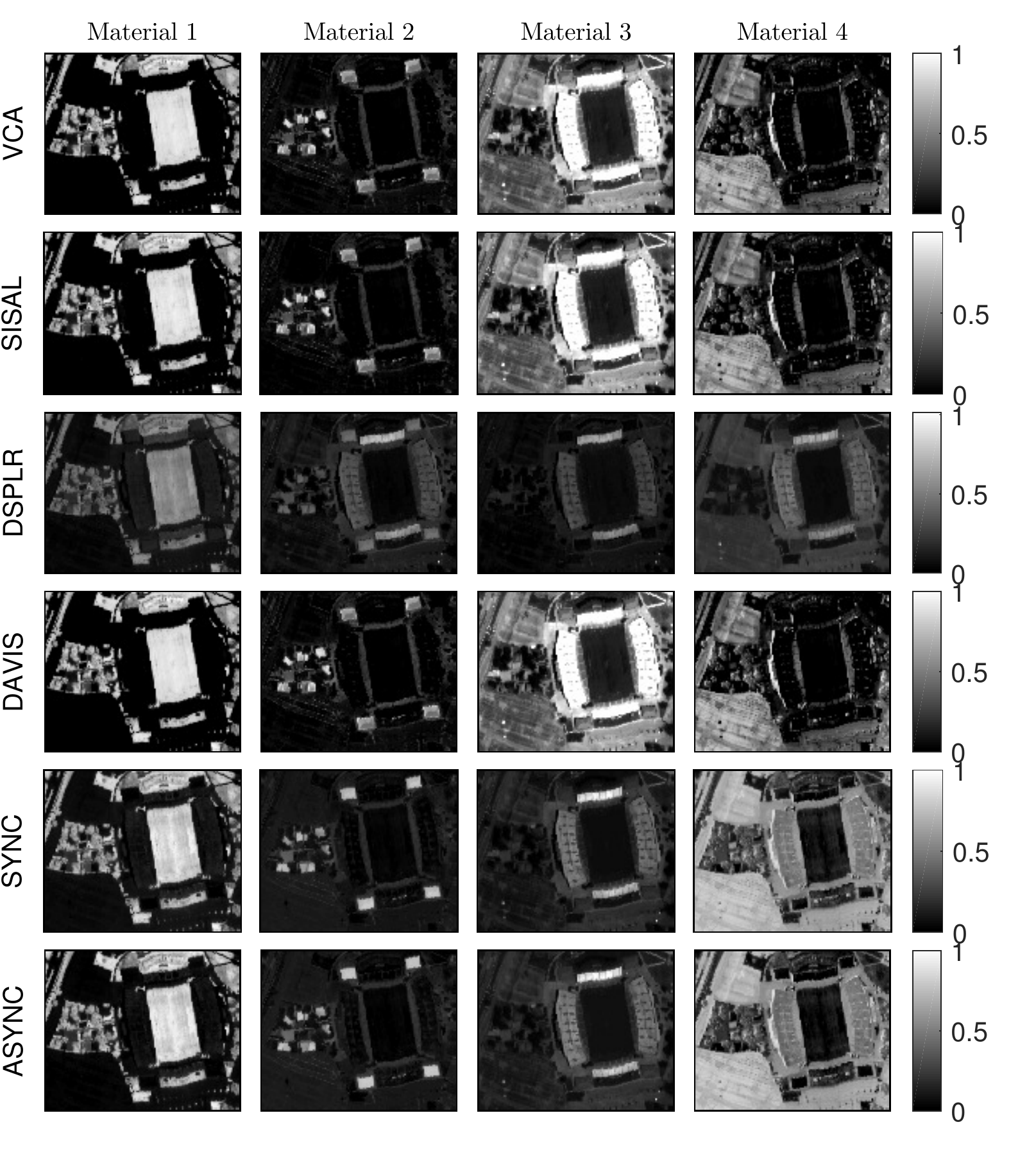}
\caption{Abundance maps recovered by the different methods (in each row) for the Houston dataset.}
\label{fig:A_houston}
\end{figure}

%
\def\materials{{Soil},{Water},{Vegetation}}

\foreach \mat[count=\j] in \materials {
\def\idx{\the\numexpr\j+11} 
\begin{figure}[t!]
\centering
\includegraphics[keepaspectratio,width=0.49\textwidth]{\idx}
\caption{\mat{} abundance map recovered by the different methods (in each row) at each time instant (given in column) for the experiment on the Mud lake dataset [the different rows correspond to VCA/FCLS, SISAL/FCLS, DSPLR \cite{Tsinos2017}, DAVIS \cite{Davis2016}, SYNC and ASYNC methods].}
\label{fig:A\j_real}
\end{figure}}
\def\names{{vca},{sisal},{dsplr},{davis},{sync},{async}}
\def\materials{{Soil},{Water},{Veg.}}

\begin{figure}[t]
\centering
\foreach \name [count=\i] in \names {
	\foreach \material[count=\j] in \materials {
		\def\idx{\the\numexpr\j+(\i-1)*3+14} 
		\begin{subfigure}[t]{0.15\textwidth}	
		\includegraphics[keepaspectratio,width=\textwidth]{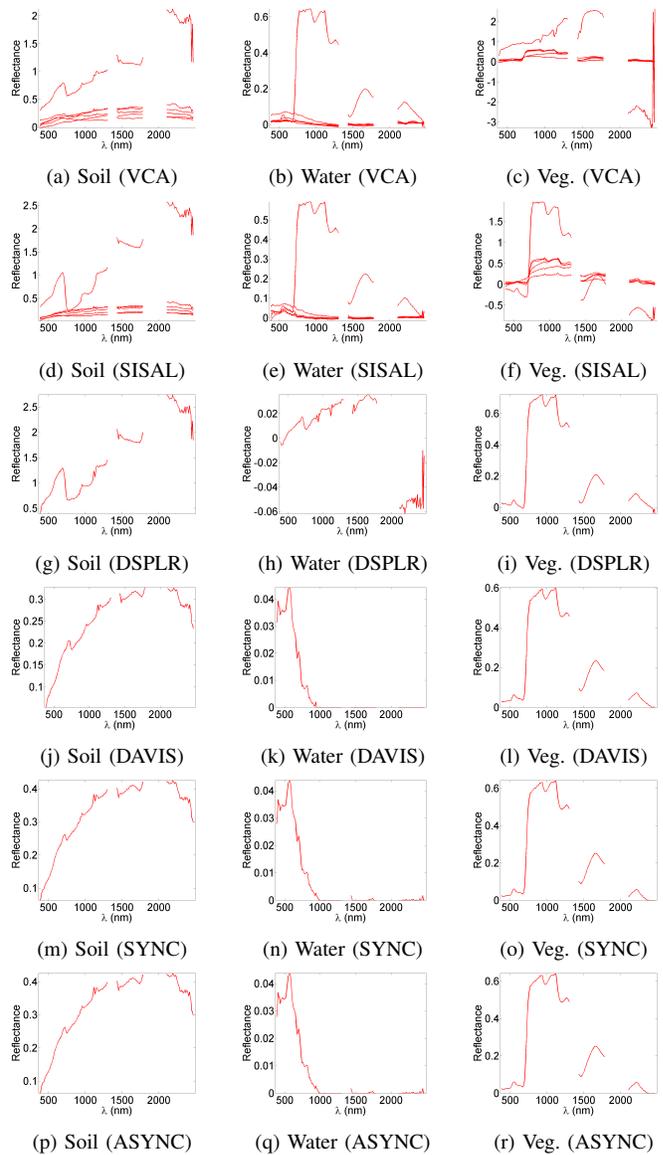}
		\caption{\material{} (\MakeUppercase{\name})}
		\label{fig:endm_\j_\name}
		\end{subfigure}
	} \\
}
\caption{Endmembers ($\m_r$, red lines) recovered by the different methods from the real dataset depicted in Fig.~\ref{fig:cube}. Endmembers extracted by VCA, SISAL and DSPLR show a notable sensitivity to the presence of outliers in these data, }
\label{fig:real_endm}
\end{figure}

\begin{figure*}[t!]
    \centering
    \begin{subfigure}[t]{0.32\textwidth}
    	\centering	\includegraphics[keepaspectratio,width=0.99\textwidth]{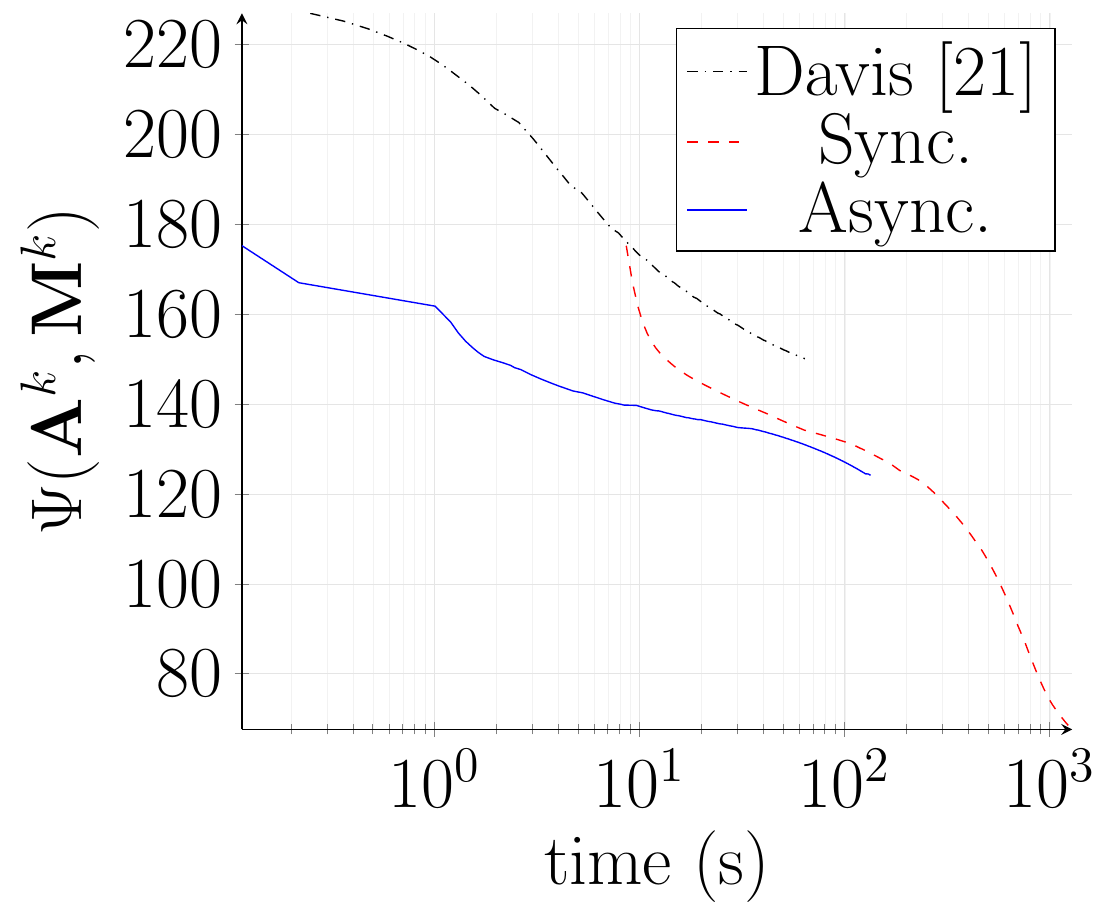}
    	\caption{Cuprite}
    	\label{fig:f_cuprite}
    \end{subfigure}
    \begin{subfigure}[t]{0.32\textwidth}
    	\centering	\includegraphics[keepaspectratio,width=0.99\textwidth]{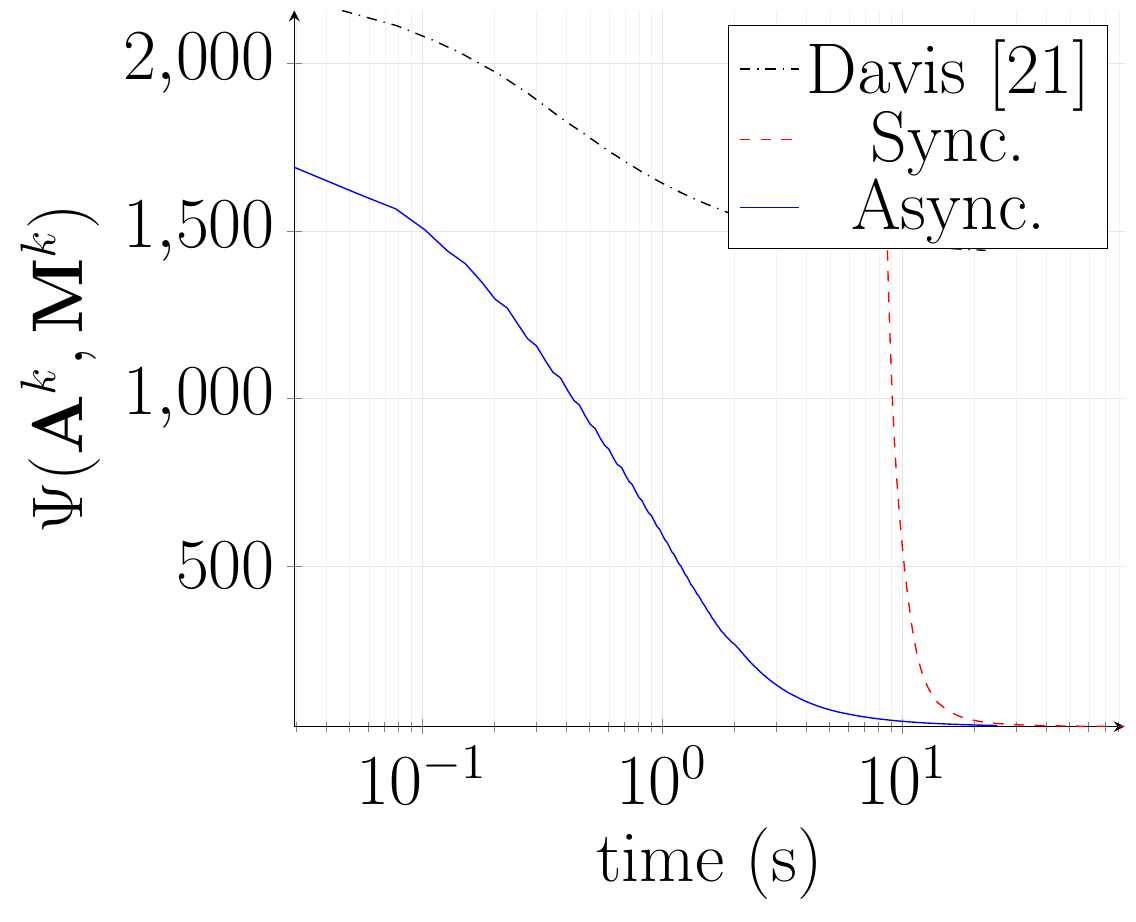}
    	\caption{Houston}
    	\label{fig:f_houston}
    \end{subfigure}
    \begin{subfigure}[t]{0.32\textwidth}
    	\centering	\includegraphics[keepaspectratio,width=0.99\textwidth]{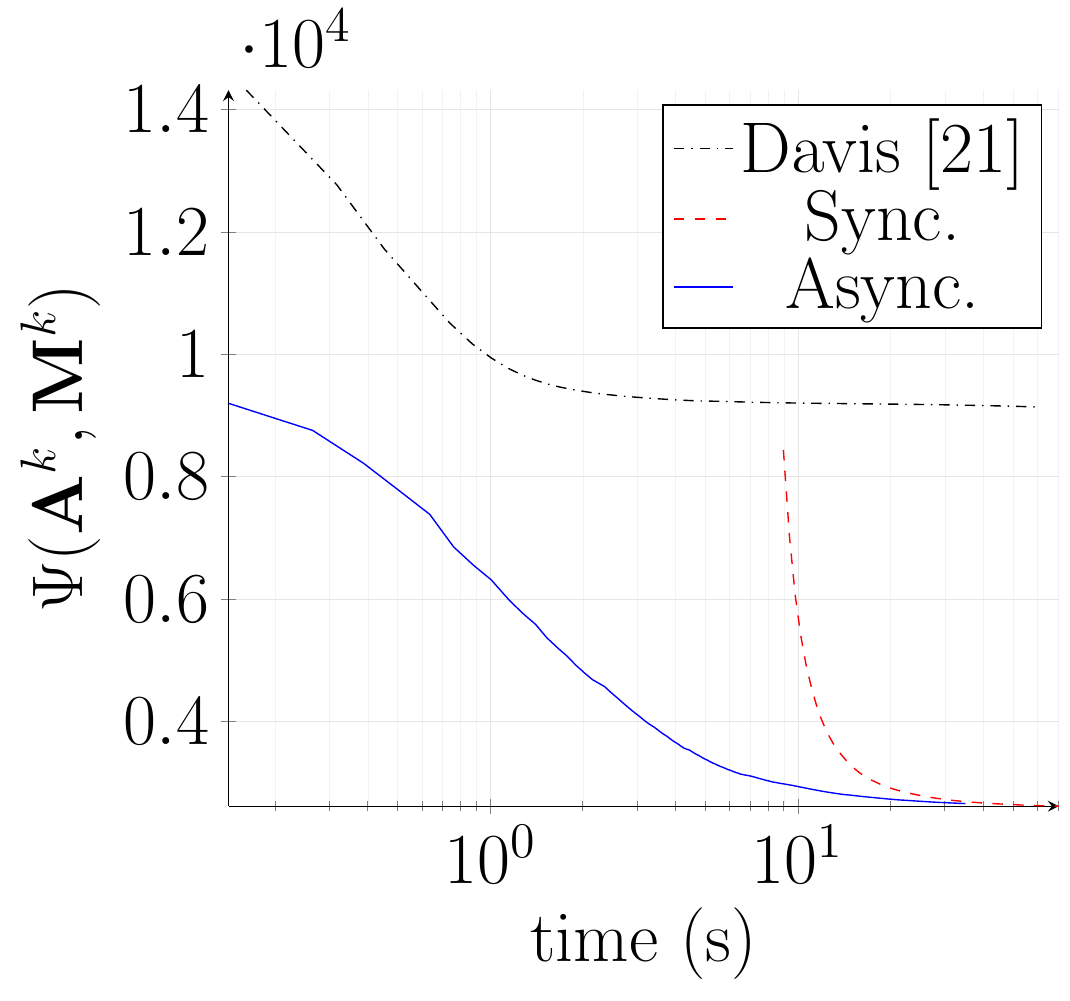}
    	\caption{Mud lake}
    	\label{fig:f_mud}
    \end{subfigure}
    \caption{Evolution of the objective function for the synthetic datasets, obtained for DAVIS~\cite{Davis2016}, Algo.~\ref{alg:master} and its synchronous version until convergence.}
    \label{fig:objective_real}
\end{figure*}


\bibliographystyle{IEEEtran}
\bibliography{strings_all_ref,all_ref}

\begin{IEEEbiography}[{\includegraphics[width=1in,height=3in,clip,keepaspectratio]{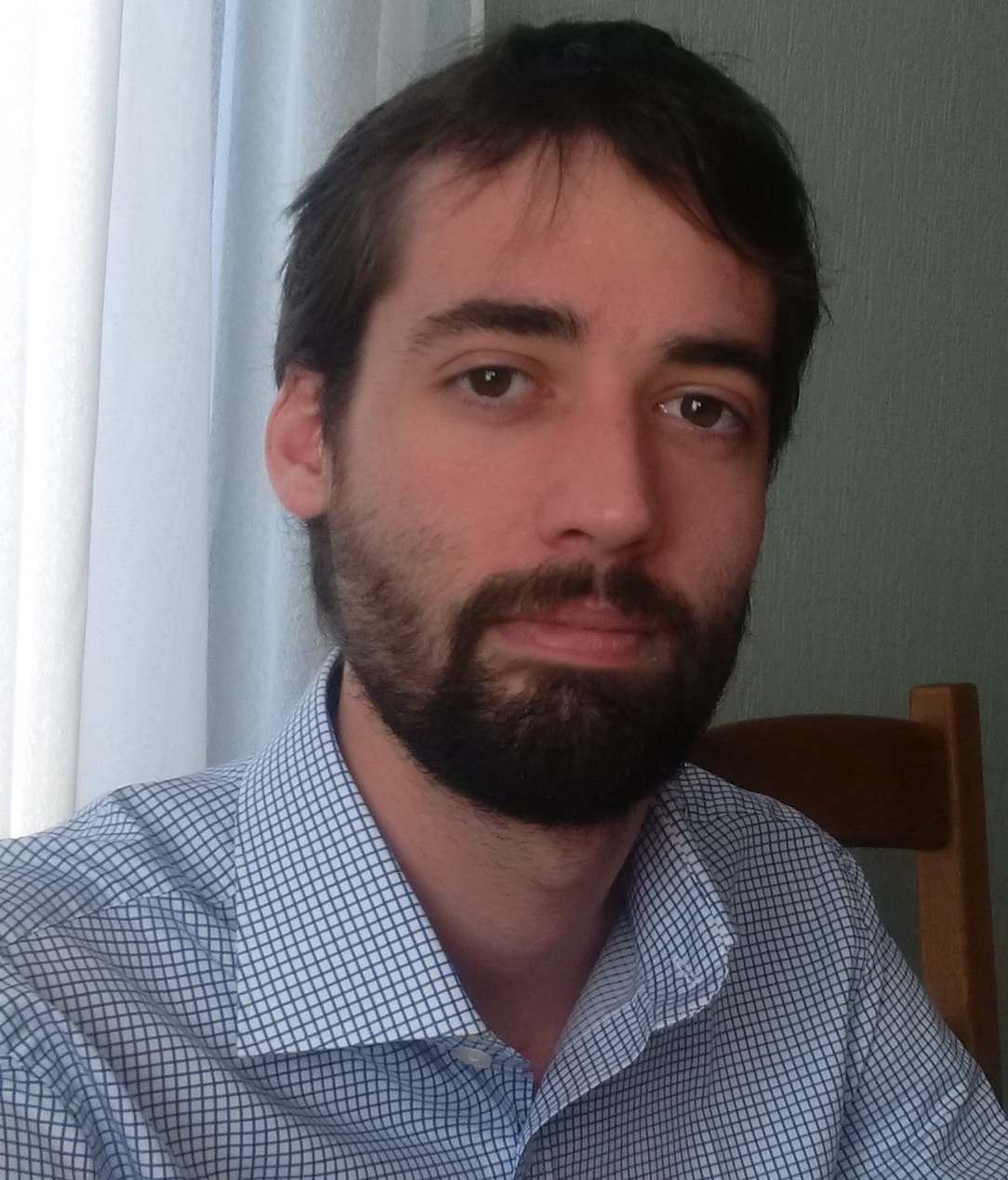}}]{Pierre-Antoine Thouvenin} (S'15--M'17) received the state engineering degree in electrical engineering from ENSEEIHT, Toulouse, France, and the M.Sc. degree in signal processing from the National Polytechnic Institute of Toulouse (INP Toulouse), both in 2014, as well as the PhD degree in Signal Processing from the INP Toulouse in 2017. Since September 2017, he has been working as a post-doctoral research associate within the Biomedical and Astronomical Signal Processing (BASP) group, Heriot-Watt University, Edinburgh, UK. His research interests include statistical modeling, optimization techniques and hyperspectral unmixing.
\end{IEEEbiography}

\begin{IEEEbiography}[{\includegraphics[width=1in,height=3in,clip,keepaspectratio]{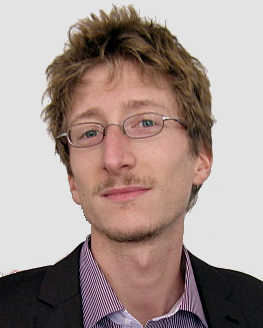}}]{Nicolas Dobigeon} (S'05--M'08--SM'13) received the state engineering degree in electrical engineering from ENSEEIHT, Toulouse, France, and the M.Sc. degree in signal
processing from the National Polytechnic Institute of Toulouse (INP Toulouse), both in June 2004, as well as the Ph.D. degree and Habilitation {\`a} Diriger des Recherches in Signal Processing from the INP Toulouse in 2007 and 2012, respectively.
He was a Post-Doctoral Research Associate with the Department of Electrical Engineering and Computer Science, University of Michigan, Ann Arbor, MI, USA, from 2007 to 2008.

Since 2008, he has been with the National Polytechnic Institute of Toulouse (INP-ENSEEIHT, University of Toulouse) where he is currently a Professor. He conducts his research within the Signal and Communications Group of the IRIT Laboratory and he is also an affiliated faculty member of the Telecommunications for Space and Aeronautics (T{\'e}SA) cooperative laboratory.
His current research interests include statistical signal and image processing, with a particular interest in Bayesian inverse problems with applications to remote sensing, biomedical imaging and genomics.
\end{IEEEbiography}

\begin{IEEEbiography}[{\includegraphics[width=1in,height=3in,clip,keepaspectratio]{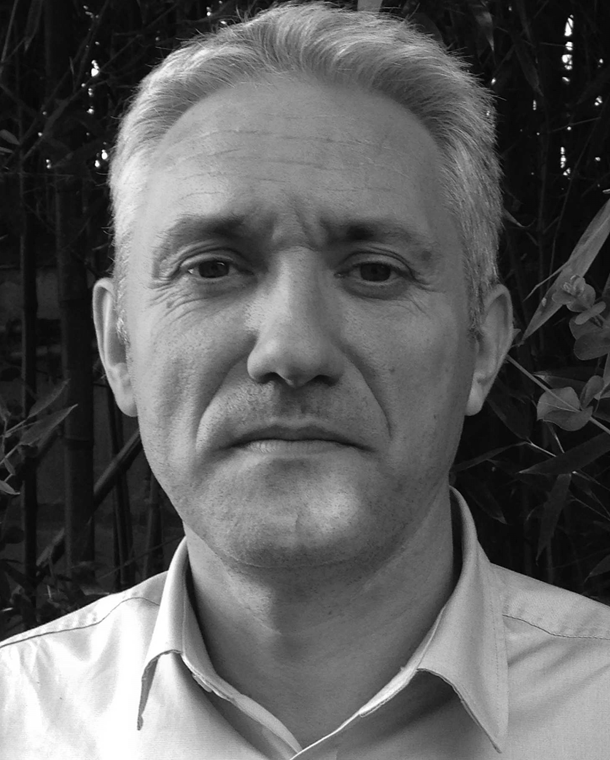}}]{Jean-Yves Tourneret} (SM'08) received the ing{\'e}nieur degree in electrical engineering from the Ecole Nationale Sup{\'e}rieure d'Electronique, d'Electrotechnique, d'Informatique, d'Hydraulique et des T{\'e}l{\'e}communications (ENSEEIHT) de Toulouse in 1989 and the Ph.D. degree from the National Polytechnic Institute from Toulouse in 1992. He is currently a professor in the university of Toulouse (ENSEEIHT) and a member of the IRIT laboratory (UMR 5505 of the CNRS). His research activities are centered around statistical signal and image processing with a particular interest to Bayesian and Markov chain Monte-Carlo (MCMC) methods. He has been involved in the organization of several conferences including the European conference on signal processing EUSIPCO'02 (program chair), the international conference ICASSP'06 (plenaries), the statistical signal processing workshop SSP'12 (international liaisons), the International Workshop on Computational Advances in Multi-Sensor Adaptive Processing CAMSAP 2013 (local arrangements), the statistical signal processing workshop SSP'2014 (special sessions), the workshop on machine learning for signal processing MLSP'2014 (special sessions). He has been the general chair of the CIMI workshop on optimization and statistics in image processing hold in Toulouse in 2013 (with F. Malgouyres and D. Kouam{\'e}) and of the International Workshop on Computational Advances in Multi-Sensor Adaptive Processing CAMSAP 2015 (with P. Djuric). He has been a member of different technical committees including the Signal Processing Theory and Methods (SPTM) committee of the IEEE Signal Processing Society (2001-2007, 2010-present). He has been serving as an associate editor for the IEEE Transactions on Signal Processing (2008-2011, 2015-present) and for the EURASIP journal on Signal Processing (2013-present).
\end{IEEEbiography}

\end{document}